\documentclass[letterpaper, 10 pt, conference]{ieeeconf}  
\usepackage{graphicx}
\usepackage{hyperref}
\usepackage[dvipsnames]{xcolor}
\usepackage{cite}
\usepackage{layouts}
\usepackage{algorithm}
\usepackage{algpseudocode}

\IEEEoverridecommandlockouts                              

\overrideIEEEmargins                                      

\usepackage{amsmath,amsfonts,amssymb,amsthm}
\newtheorem{prop}{Proposition}
\newtheorem{thm}{Theorem}
\newtheorem{corr}{Corollary}
\newtheorem{lem}{Lemma}
\newtheorem{prob}{Problem}

\newtheorem{defn}{Definition}

\newcommand{\ind}[1]{{1\!\!1}\left\{#1\right\}}
\newcommand{\xlo}{\underline{x}}
\newcommand{\xhi}{\overline{x}}

\newcommand{\R}{\mathbb{R}}

\newcommand{\N}{\mathbb{N}}

\title{Forward Reachability for Discrete-Time Nonlinear Stochastic Systems via Mixed-Monotonicity and Stochastic Order}

\author{Vignesh Sivaramakrishnan, Rosalyn A. Devonport, Murat Arcak, and Meeko M.K. Oishi
        \thanks{This material is based upon research supported by the Air Force Research Lab (AFRL) under agreement number FA9550-23-1-0646, and in collaboration with Verus Research under agreement number FA9453-23-C-A025.  The U.S. Government is authorized to reproduce and distribute reprints for Governmental purposes notwithstanding any copyright notation thereon.  The views and conclusions contained herein are those of the authors and should not be interpreted as necessarily representing the official policies or endorsements, either expressed or implied, of Air Force Research Laboratory (AFRL) and or the U.S. Government.}
        \thanks{This work is funded in part by NSF grant CNS-2111688.}
        \thanks{Vignesh Sivaramakrishnan is a graduate student with Electrical and Computer Engineering, University of New Mexico, Albuquerque, NM.\newline Email: \tt{vigsiv@unm.edu}.} 
        \thanks{Rosalyn Alex Devonport is a postdoctoral scholar with Electrical and Computer Engineering, University of New Mexico, Albuquerque, NM.\newline Email: \tt{devonport@unm.edu}.} 
        \thanks{Murat Arcak is a Professor with Electrical Engineering and Computer Sciences, University of California, Berkeley, CA.\newline Email: \tt{arcak@berkeley.edu}.}
        \thanks{Meeko M.K. Oishi is a Professor with Electrical and Computer Engineering, University of New Mexico, Albuquerque, NM.\newline Email: \tt{oishi@unm.edu}.}
}

\begin{document}
\maketitle

\begin{abstract}
We present a method to overapproximate forward stochastic reach sets of discrete-time, stochastic nonlinear systems with interval geometry.
This is made possible by extending the theory of mixed-monotone systems to incorporate stochastic orders, and a concentration inequality result that lower-bounds the probability the state resides within an interval through a monotone mapping.  
Then, we present an algorithm to compute the overapproximations of forward reachable set and the probability the state resides within it. 
We present our approach on two aerospace examples to show its efficacy.
\end{abstract}

\section{Introduction}

Ensuring whether a system will perform as one desires is a crucial step in its design and operation.
For example, in aerospace, accurately quantifying the future positions and orientations of a spacecraft over long time horizons is crucial for mission success~\cite{Holzinger2021Cislunar}.  
Forward reachability provides a mathematical formalism to quantify the set of states a system will reach from an initial set of states.
Specifically, \textit{forward stochastic reachability} finds not only the set of states a stochastic system will reach, but also provides the probability its states lie within it.

Forward stochastic reachability for discrete time linear systems has a rich literature that we may divide according to the regularity assumptions used to compute the stochastic reachable set. 
The most popular collection of assumptions comprises linearity of the system and boundedness of the uncertainty, either as a random variable~\cite{kvasnica2015reachability,kurzhanskiy2006ellipsoidal,girard2005reachability} or as a characteristic function~\cite{vinod2017forward,vinod2020probabilistic}; Other works replace linearity with a weaker assumption of nonlinear structure~\cite{sankaranarayanan2020reasoning,stankovic1996taylor,estrada1993taylor, xue2017just, xue2021reach}.

Data-driven approaches~\cite{thorpe2021learning,lew2021sampling,devonport2020estimating,devonport2023data,hashemi2023data} employ the weakest possible assumptions---typically no more than measurability;
however, to ascertain a strong guarantee on the value of the state requires a significant number of samples, without a clear notion whether the data-driven representation is an over or under approximation. 

Our interests lie in directly obtaining the forward reach set in a tractable manner in nonlinear systems with order-preserving dynamics. The archetype of such systems is the class of \emph{monotone} systems~\cite{angeli2003monotone} prominent in cooperative multi-agent systems, biological systems, and traffic networks~\cite{coogan2016mixed}. 
The monotonicity property is particularly useful for computing interval approximations to forward reachable sets, as the tightest interval over-approximation to a reachable set can be computed with two dynamical simulations. Many extensions exist available for systems that do not satisfy the true order-preserving property; there is \emph{mixed monotonicity}~\cite{enciso2006nonmonotone,  angeli2014small, coogan2016mixed} for systems that can be decomposed into increasing and decreasing parts, and \emph{sensitivity}~\cite{meyer2018sampled,meyer2020ifac} for extension of mixed-monotonicity into continuous-time. 
The applications we consider in this paper focus on the case of mixed monotonicity. Mixed monotone methods provide mode conservatiev overapproximations than either pure monotone methods or sensitivity methods; but because of the relative restrictiveness of the pure monotonicity assumption, and the difficulties involved in computing the sensitivity functions, mixed monotone methods are the most pragmatic class of methods out of the three. Thus, in this paper \emph{we elect to make a trade-off that favors extreme computational efficiency, wide applicability, and ease of construction of algorithms at the cost of introducing conservatism into the reachable set over-approximation}.
It is beyond the scope of this paper to make a comparison of the potential of all three methods for stochastic generalization: for a general overview of interval reachability, with a particular emphasis on order-preserving properties and their generalizations, see~\cite{meyer2019tira,meyer2021interval,coogan2020mixed}. 

Order-preservation properties are predominantly studied in the non-stochastic setting. In order to extend this useful and efficient class of reachability methods to the setting of stochastic reachability, it would be necessary to extend the notion of order preservation to one that encompasses orderings among random variables.
\textit{
The main contribution of this paper is to provide such an extension; that is, a theory for mixed-monotone dynamics with stochastic orders that enables us to estimate forward stochastic reach sets of discrete-time, nonlinear stochastic systems using their mixed-monotone decompositions.
}
\begin{figure}[t!]
    \centering
    \includegraphics[width=\linewidth]{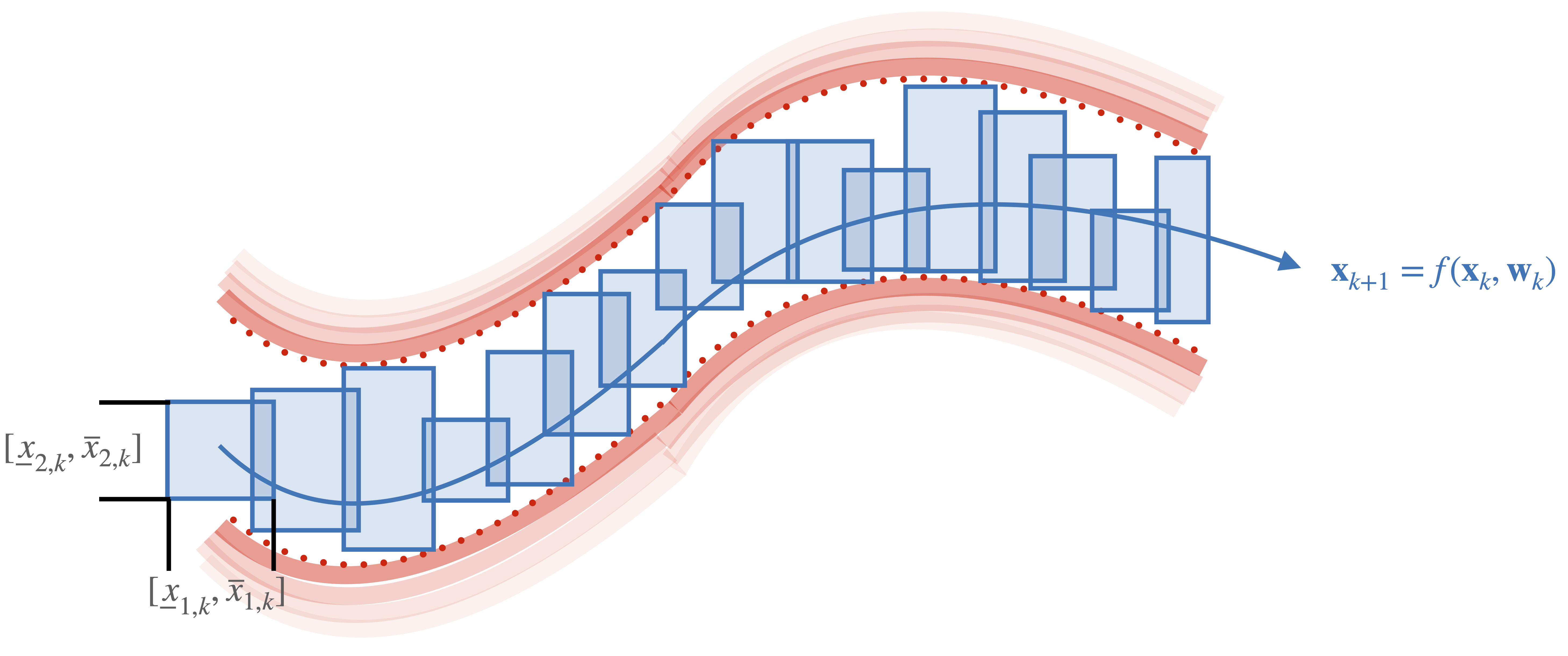}
    \caption{Our approach combines the benefits of mixed monotonicity via the concentration inequalities we derive by establishing partial order of mixed-monotone decomposition of discrete-time stochastic systems via stochastic order.
    This allows us to determine the probability (denoted in red) the states of the system lie within the overapproximation of forward reachable sets in the form of intervals (denoted in blue).}
    \label{fig:highLevelVis}
\end{figure}
In order to extend mixed-monotone systems theory to stochastic systems, one needs a stochastic notion of partial order.
Stochastic order addresses this requirement by establishing an order between probability measures~\cite{stocOrderShakedShantikumar}.
This extension of mixed-monotone decomposition to stochastic systems allows us to compute concentration inequalities that ensure that a distribution passed through a monotone mapping will reside within an interval with some probability. 

We structure the rest of this paper as follows:
Section~\ref{sec:prelimProbForm} introduces the preliminaries necessary to establish mixed-monotonicity for stochastic systems and the problems we wish to solve. 
Section~\ref{sec:mixedMonoStoc} extends mixed-monotone systems theory via stochastic order. 
We first address the one step overapproximation of the forward reach set by exploiting the properties of monotone functions and prove how to obtain its probability of residing within the overapproximation in Section~\ref{sec:oneStep} then extend the result to a finite time horizon in Section~\ref{sec:nStep}. 
Examples follow in Section~\ref{sec:Examples}. 


\section{Preliminaries and problem formulation}
\label{sec:prelimProbForm}

\subsection{Probabilistic Forward Reachability}

We denote vectors with lowercase variables, $v\in\mathbb{R}^n$, and matrices with uppercase variables, $V\in\mathbb{R}^{n\times m}$.
The element of a vector $v$ is $v_i$ and an element of a matrix $V$ is $V_{ij}$, where $i,j\in\mathbb{N}$. 
We restrict our analysis to element-wise partial orders, $\leq$, with respect to the positive orthant, $\R^{n_\circ}_+$.
Random vectors are in bold, $\mathbf{v}$. 
To define the probabilistic elements of the problem, we define a probability space $(\Omega, \mathcal{F}, \mathbb{P})$. 
The dynamical system is a function, $f: \R^{n_x}\times \R^{n_w}\rightarrow \R^{n_x}$, 
\begin{equation}
	\label{eq:nonlinearSys}
	\mathbf{x}_{k+1} = f(\mathbf{x}_{k},\mathbf{w}_k).
\end{equation} 
For a random initial condition $\mathbf{x}_0:\Omega\to\R^{n_x}$ and a random disturbance sequence, $\mathbf{w}_k:\Omega\to\R^{n_w}$, for $k\in \N_+$ we define the random vector $\mathbf{x}_k:\Omega\to\R^{n_x}$ in terms of $\mathbf{x}_0$ and $\mathbf{w}_k$ according to the state evolution equation~\eqref{eq:nonlinearSys}. 
We can derive from a random $\mathbf{h}:\Omega\to\R^n$ its probability measure $P_\mathbf{h} = \mathbb{P} \circ \mathbf{h}^{-1}$, where $\circ$ denotes a composition of two functions. 
We interpret its probability measure as $P_{\mathbf{h}}(A)=\mathbb{P}(\{\omega: \mathbf{h(\omega)}\in A\})$ for $A\subseteq\R^n$ as the probability that $\mathbf{h}$ obtains a value in $A$. 
We denote two random vectors, $\mathbf{x}$ and $\mathbf{y}$, being equal in distribution by $\mathbf{x}=_{\mathrm{st}}\mathbf{y}$. 

 Our goal, formalized in Problem~\ref{prob:multi-step-reachability}, is to recursively construct a sequence of sets $X_k\subseteq\R^{n_x}$ that capture the general evolution of $\mathbf{x}_k$ by obtaining a prescribed probability mass. 

\begin{prob}[Finite-step stochastic monotone forward reachability]
\label{prob:multi-step-reachability}
Suppose we have a discrete-time dynamical system as in~\eqref{eq:nonlinearSys},
a random initial state $\mathbf{x}_0:\Omega\to\R^{n_x}$, a sequence $\mathbf{w}_k:\Omega\to\R^{n_w}$ of random disturbances where $k = 0,\dotsc,N$.
The state and disturbance reside in sets $X_0 \times W_1 \times \cdots \times W_k$ respectively with probability $\ge 1-\delta$ under their joint measure for a known $\delta>0$.
We wish to determine
a sequence of sets $X_k$ and a sequence of probability values $\delta_k > 0$ that satisfy the sequence of concentration inequalities, $P_{\mathbf{x}_k}\left( X_k\right) \ge 1 - \delta_k$.
\end{prob}
The most efficient way to apply monotonicity methods to Problem~\ref{prob:multi-step-reachability} is to first solve the single step case, i.e. take $k=1$, and to proceed recursively to multi-step reachability by upper-bounding the original problem via Boole's inequality. 
While the single-step reachability is evidently a special case of Problem~\ref{prob:multi-step-reachability}, its significance to the analysis in this paper warrants a separate statement.
\begin{prob}[one-step stochastic monotone forward reachability]
\label{prob:one-step-reachability}
Suppose we have a discrete-time dynamical system as in~\eqref{eq:nonlinearSys},
a random initial state $\mathbf{x}_0:\Omega\to\R^{n_x}$, and a disturbance $\mathbf{w}:\Omega\to\R^{n_w}$. 
The state and disturbance reside in sets $X_0 \times W$ respectively with probability $\ge 1-\delta_0$ under the joint measure of $\textbf{x}_0$ and $\textbf{w}$ for a known $\delta_0>0$.
We wish to determine a set $X_1$ and probability, $\delta_1$, satisfying the concentration inequality,
\begin{equation}
    \label{eq:onestep_concentration}
    P_{\mathbf{x}_1}(X_1) \ge 1 - \delta_1.
\end{equation}
\end{prob}

\subsection{Mixed Monotone Systems}
Though more tractable than Problem~\ref{prob:multi-step-reachability}, Problem~\ref{prob:one-step-reachability} is still difficult to solve for geometrically complicated $X_k$ and for nonlinear systems like~\eqref{eq:nonlinearSys}.
Thus the literature opts for geometrically simple approximations that exploit certain system properties; in our case, we use multidimensional interval geometry and the property of mixed-monotonicity.

\begin{defn}[Mixed-Monotonicity in Discrete Time {\cite{coogan2020mixed,meyer2021interval}}]
\label{def:mixedMonotone}
Suppose we have a system, \eqref{eq:nonlinearSys}, and a monotone decomposition $g: \R^{n_x}\times \R^{n_w} \times \R^{n_x}\times \R^{n_w}\times \R^{n_w} \rightarrow \R^{n_x}$. 
The system is mixed-monotone with respect to $g$ if the following conditions hold:
\begin{enumerate}
    \item $f(x,w) = g(x,w,x,w)$, $\forall x\in\R^{n_x}$ and $w\in\R^{n_w}$. 
    \item $g(x,w,\hat x, \hat w)\leq g(y,v,\hat x, \hat w)$, $\forall x,y,\hat x\in\R^{n_x}$ such that $x \leq y$, and $w,v,\hat w\in\R^{n_w}$ such that $w \leq v$. 
    \item $g(x, w, \hat{y}, \hat{v}) \leq g(x, w, \hat{x}, \hat{w})$ $\forall x, \hat{x}, \hat{y} \in \mathcal{X}$ such that $\hat{x} \leq \hat{y}$, and $\forall w, \hat{w}, \hat{v} \in \R^{n_w}$ such that $\hat{w} \leq \hat{v}$. 
\end{enumerate}
\end{defn}
With this system description in hand, we can over-approximate forward reach sets using evolutions of the mixed-monotone system.  
\begin{lem}[Discrete time forward interval reach sets {\cite[Theorem 1]{coogan2015efficient}}]
    Suppose we have a system \eqref{eq:nonlinearSys} with decomposition function, $g$. 
    If $[\underline{x},\overline{x}]\subseteq \R^{n_x}$ and $[\underline{w},\overline{w}]\subseteq \R^{n_w}$ where $\underline{x},\overline{x}\in\R^{n_x}$ and $\underline{w},\overline{w}\in\R^{n_w}$, then
    \begin{equation}
        \label{eq:intervalReach}
         g(\underline{x},\underline{w},\overline{x}, \overline{w}) \leq f(x,w) \leq g(\overline{x},\overline{w},\underline{x},\underline{w}),
    \end{equation}
    $\forall x\in[\underline{x},\overline{x}]$ and $\forall w\in[\underline{w},\overline{w}]$. 
\end{lem}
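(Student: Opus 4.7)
The plan is to chain the three properties of Definition~\ref{def:mixedMonotone} in the right order, using property (1) to replace $f$ by a diagonal evaluation of $g$, then using monotonicity in the first two arguments (property 2) and anti-monotonicity in the last two arguments (property 3) to push the diagonal evaluation to the stated corners of the interval. Since the claim is a conjunction of two inequalities, I would prove the upper and lower bound separately but symmetrically.

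For the upper bound, I would first rewrite $f(x,w) = g(x,w,x,w)$ using property (1). Then, since $x \leq \overline{x}$ and $w \leq \overline{w}$, property (2) yields $g(x,w,x,w) \leq g(\overline{x},\overline{w},x,w)$, where the last two arguments are held fixed at $(x,w)$. Next, since $\underline{x} \leq x$ and $\underline{w} \leq w$, property (3) yields $g(\overline{x},\overline{w},x,w) \leq g(\overline{x},\overline{w},\underline{x},\underline{w})$, where now the first two arguments are held fixed at $(\overline{x},\overline{w})$. Chaining these gives $f(x,w) \leq g(\overline{x},\overline{w},\underline{x},\underline{w})$.

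For the lower bound, I would apply the same idea in reverse. Starting from $g(\underline{x},\underline{w},\overline{x},\overline{w})$, property (2) with $\underline{x} \leq x$ and $\underline{w} \leq w$ gives $g(\underline{x},\underline{w},\overline{x},\overline{w}) \leq g(x,w,\overline{x},\overline{w})$. Property (3) with $x \leq \overline{x}$ and $w \leq \overline{w}$ gives $g(x,w,\overline{x},\overline{w}) \leq g(x,w,x,w)$, and property (1) identifies the right-hand side as $f(x,w)$.

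There is no real obstacle here: the proof is essentially a bookkeeping exercise in applying the three defining properties in the right sequence, and the only care needed is to make sure that at each step the arguments being held fixed are consistent with the hypotheses of properties (2) and (3) (namely that the fixed-argument slots are unconstrained, which they are by assumption in Definition~\ref{def:mixedMonotone}). The non-trivial content of the lemma has already been front-loaded into the definition of a monotone decomposition.
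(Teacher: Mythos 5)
Your proof is correct: the paper itself gives no proof, merely citing Theorem 1 of the referenced work, and your chaining of properties (1), (2), and (3) of Definition~\ref{def:mixedMonotone} is exactly the standard argument used there. No gaps; the bookkeeping for both bounds is carried out with the fixed and varying argument slots correctly matched to the hypotheses of each property.
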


The above over-approximation provides $X_k$ as hyperrectangles, i.e. $X_k = [\underline{x}_k,\overline{x}_k]$, recursively for $k = 0,\dotsc,N$. It is precisely this result for which we seek a stochastic analogue; with such a result in hand, we will be able to construct interval approximations $X_k$ of the stochastic reachable sets that solve Problem~\ref{prob:multi-step-reachability}.


\subsection{Stochastic Orders}

In order to use mixed-monotonicity methods to solve the stochastic reachability problem posed in Problems~\ref{prob:multi-step-reachability} and~\ref{prob:one-step-reachability}, we must lift the notion of mixed monotinicity from the deterministic case to the stochastic case. Our approach, in a nutshell, is to use a notion of \emph{stochastic order} instead of the componentwise vector partial orders used in the ordinary setting of monotone functions. We briefly introduce the stochastic ordering we use, namely that of~\cite{stocOrderShakedShantikumar}. 

It is reasonable to assert that a scalar random variable $\mathbf{x}$ is smaller than another scalar random variable $\mathbf{y}$ under the condition that $P_\mathbf{x}([\alpha,\infty))$ is dominated by $P_\mathbf{y}([\alpha,\infty))$ for all $\alpha$; informally, what this asserts is that $\mathbf{y}$ is always more likely to exceed a given value $\alpha$ than $\mathbf{x}$ is.
The stochastic order theory of~\cite{stocOrderShakedShantikumar} is basically a vectorial extension of this concept. To formally define this partial order, we require a vectorial extension of the sets $[\alpha,\infty)$ called \emph{upper sets}.
\begin{defn}[Upper set~{\cite[Definition II.1.2]{dolecki2016convergence}}]
A set, $U\subseteq \R^n$, is an upper set in $\R^n$ such that $\forall x\in U$ and $\forall y \in \R^n$, if $x\leq y$ and $x\in U$, then $y\in U$. 
\end{defn}
Notice that we have made a choice of vector partial order $\le$ in defining upper sets; in the sequel, whenever we mix deterministic partial orders and stochastic orders, the deterministic partial order should be understood to be the same order used to define the upper sets.
\begin{defn}[Stochastic Order of random vectors~{\cite[Theorem 6.B.1]{stocOrderShakedShantikumar}}]
    A random vector $\mathbf{x}: \Omega \rightarrow \mathbb R^{n}$ is smaller $\mathbf{y}: \Omega \rightarrow \mathbb R^{n}$ in stochastic order, denoted as $\mathbf{x}\leq_{\mathrm{st}} \mathbf{y}$, considering the following holds, 
    \begin{equation}
        \label{eq:stochastic_order_definition}
        P_\textbf{x}(U) \le P_\textbf{y}(U)
    \end{equation}
    for all upper sets $U\subseteq\R^n$. 
\end{defn}

Stochastic orders enjoy a theory of monotonicity very similar to that of deterministic partial orders, but with an additional subtlety owing to the fact that there are really two orders at play: the stochastic order, and the underlying deterministic order used to define the upper sets. A core result in stochastic order theory is that a stochastic order between random variables is preserved by functions monotone with respect to the underlying deterministic order.
\begin{lem}[Stochastic order through a monotone function~{\cite[Theorem 6.B.16.b]{stocOrderShakedShantikumar}}]
    \label{lem:MonotoneStochOrder}
    Let $\mathbf{x}_i: \Omega \rightarrow \R^{n_{i}}$ and $\mathbf{y}_i: \Omega \rightarrow \R^{n_{i}}$ for $i = \{1,\hdots, m\}$ be a set of independent random vectors. 
    If $\mathbf{x}_i\leq_{\mathrm{st}}\mathbf{y}_i$ for $i=\{1,\hdots,m\}$, for any increasing function, $g: \R^{n_1\times\cdots\times n_m}\rightarrow\R$, we have, 
    \begin{equation}
        g(\mathbf{x}_1,\hdots,\mathbf{x}_m) \leq_{\mathrm{st}} g(\mathbf{y}_1,\hdots,\mathbf{y}_m)
    \end{equation}
\end{lem}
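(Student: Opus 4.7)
The plan is to split the claim into two parts: first lifting the componentwise stochastic orderings to a joint ordering $(\mathbf{x}_1,\ldots,\mathbf{x}_m) \leq_{\mathrm{st}} (\mathbf{y}_1,\ldots,\mathbf{y}_m)$ on the product space, and second transporting that joint ordering through the monotone map $g$. The second step is the structural heart of stochastic monotonicity and is straightforward once the first is in hand: for any upper set $U \subseteq \R$, the preimage $g^{-1}(U)$ is itself an upper set in $\R^{n_1}\times\cdots\times\R^{n_m}$ (if $z \in g^{-1}(U)$ and $z \leq z'$, then monotonicity gives $g(z') \geq g(z) \in U$, and since $U$ is upper, $g(z') \in U$), so that
\begin{equation*}
P_{g(\mathbf{x}_1,\ldots,\mathbf{x}_m)}(U) = P_{(\mathbf{x}_1,\ldots,\mathbf{x}_m)}(g^{-1}(U)) \leq P_{(\mathbf{y}_1,\ldots,\mathbf{y}_m)}(g^{-1}(U)) = P_{g(\mathbf{y}_1,\ldots,\mathbf{y}_m)}(U),
\end{equation*}
which is the desired conclusion.

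The first step, obtaining joint stochastic order from componentwise orderings plus independence, is where the real work lies. I would invoke Strassen's theorem for stochastic order: for each $i$, the hypothesis $\mathbf{x}_i \leq_{\mathrm{st}} \mathbf{y}_i$ is equivalent to the existence of a coupling $(\hat{\mathbf{x}}_i,\hat{\mathbf{y}}_i)$ on a common probability space with $\hat{\mathbf{x}}_i =_{\mathrm{st}} \mathbf{x}_i$, $\hat{\mathbf{y}}_i =_{\mathrm{st}} \mathbf{y}_i$, and $\hat{\mathbf{x}}_i \leq \hat{\mathbf{y}}_i$ almost surely. I would then place these $m$ couplings on a single product probability space so that they are mutually independent across $i$. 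The assumed independence of the $\mathbf{x}_i$'s (respectively the $\mathbf{y}_i$'s) ensures that $(\hat{\mathbf{x}}_1,\ldots,\hat{\mathbf{x}}_m)$ agrees in joint distribution with $(\mathbf{x}_1,\ldots,\mathbf{x}_m)$, and likewise for the $\mathbf{y}$'s, since in both cases the joint law is the product of the same marginals. The componentwise almost-sure inequalities then yield $(\hat{\mathbf{x}}_1,\ldots,\hat{\mathbf{x}}_m) \leq (\hat{\mathbf{y}}_1,\ldots,\hat{\mathbf{y}}_m)$ almost surely, and almost-sure domination immediately implies stochastic domination by integrating the indicator of any upper set.

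The main obstacle is the independence hypothesis: without it, gluing marginal couplings together does not reproduce the correct joint laws, and the lemma can in fact fail. If an appeal to Strassen's theorem is undesirable, an inductive alternative on $m$ via successive conditioning and Fubini is available, but it likewise depends crucially on independence in order to interchange the inner and outer probability evaluations when rewriting $P_{(\mathbf{x}_1,\ldots,\mathbf{x}_m)}(g^{-1}(U))$ as an iterated integral and applying the single-variable preimage argument slicewise.
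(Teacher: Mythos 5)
The paper does not prove this lemma at all: it is imported verbatim from Shaked and Shanthikumar (Theorem 6.B.16.b), so there is no in-paper argument to compare against. Your proof is correct and is essentially the standard proof of that cited theorem---couple each pair $(\mathbf{x}_i,\mathbf{y}_i)$ so that $\hat{\mathbf{x}}_i\leq\hat{\mathbf{y}}_i$ almost surely, use mutual independence to glue the couplings on a product space without disturbing the joint laws, and then observe that preimages of upper sets under increasing maps are upper sets---and you correctly identify independence as the hypothesis that makes the gluing step legitimate.
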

We can provide an equivalent result for a decreasing function as follows. 
\begin{corr}
\label{corr:MonotoneStochOrderRev}
    Let $\mathbf{x}_i: \Omega \rightarrow \R^{n_{i}}$ and $\mathbf{y}_i: \Omega \rightarrow \R^{n_{i}}$ for $i = \{1,\hdots, m\}$ be a set of independent random vectors. 
    If $\mathbf{x}_i\leq_{\mathrm{st}}\mathbf{y}_i$ for $i=\{1,\hdots,m\}$, for any decreasing function, $g: \R^{n_1\times\cdots\times n_m}\rightarrow\R$, we have, 
    \begin{equation}
        g(\mathbf{x}_1,\hdots,\mathbf{x}_m) \geq_{\mathrm{st}} g(\mathbf{y}_1,\hdots,\mathbf{y}_m)
    \end{equation}
\end{corr}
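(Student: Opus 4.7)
The plan is to reduce Corollary~\ref{corr:MonotoneStochOrderRev} to Lemma~\ref{lem:MonotoneStochOrder} by composing the decreasing function $g$ with the order-reversing map $t\mapsto -t$ on $\R$. Concretely, I would define $\tilde g(x_1,\dots,x_m) := -g(x_1,\dots,x_m)$. If $x_i \le y_i$ componentwise for each $i$, then because $g$ is decreasing we have $g(x_1,\dots,x_m) \ge g(y_1,\dots,y_m)$, so $\tilde g(x_1,\dots,x_m)\le \tilde g(y_1,\dots,y_m)$. Thus $\tilde g$ is increasing in the sense required by Lemma~\ref{lem:MonotoneStochOrder}.

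Next I would invoke Lemma~\ref{lem:MonotoneStochOrder} directly on $\tilde g$, using the independence of the $\mathbf{x}_i$ and of the $\mathbf{y}_i$ together with the hypothesis $\mathbf{x}_i \leq_{\mathrm{st}} \mathbf{y}_i$. This yields
\begin{equation}
    \tilde g(\mathbf{x}_1,\dots,\mathbf{x}_m) \;\leq_{\mathrm{st}}\; \tilde g(\mathbf{y}_1,\dots,\mathbf{y}_m),
\end{equation}
or equivalently $-g(\mathbf{x}_1,\dots,\mathbf{x}_m) \leq_{\mathrm{st}} -g(\mathbf{y}_1,\dots,\mathbf{y}_m)$.

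The remaining step is a scalar-valued fact: for real random variables $\mathbf{a},\mathbf{b}$, one has $-\mathbf{a}\leq_{\mathrm{st}} -\mathbf{b}$ if and only if $\mathbf{a}\geq_{\mathrm{st}} \mathbf{b}$. In $\R$ the upper sets are exactly the rays $[\alpha,\infty)$ and $(\alpha,\infty)$, so $-\mathbf{a}\leq_{\mathrm{st}} -\mathbf{b}$ means $P(-\mathbf{a}\in[\alpha,\infty))\le P(-\mathbf{b}\in[\alpha,\infty))$, i.e., $P(\mathbf{a}\le -\alpha)\le P(\mathbf{b}\le -\alpha)$ for all $\alpha$. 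Taking complements and reparametrizing $\beta=-\alpha$ gives $P(\mathbf{a}>\beta)\ge P(\mathbf{b}>\beta)$ for all $\beta$, which is precisely $\mathbf{a}\geq_{\mathrm{st}}\mathbf{b}$ (the analogous statement for closed rays follows by right-continuity of the complementary CDF). Applying this to $\mathbf{a}=g(\mathbf{x}_1,\dots,\mathbf{x}_m)$ and $\mathbf{b}=g(\mathbf{y}_1,\dots,\mathbf{y}_m)$ delivers the claim.

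The only mild obstacle is the bookkeeping in the last step around open versus closed upper sets, but this is resolved by the standard monotone-continuity argument for CDFs and does not require any structural assumption beyond what is already implicit in the definition of $\leq_{\mathrm{st}}$. Everything else is a direct transport of Lemma~\ref{lem:MonotoneStochOrder} through negation, so the proof is essentially a one-line reduction.
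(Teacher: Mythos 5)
Your proof is correct, but it takes a different route from the paper's. The paper disposes of Corollary~\ref{corr:MonotoneStochOrderRev} with a one-line remark that the argument ``follows similarly'' to the proof of Lemma~\ref{lem:MonotoneStochOrder} in~\cite[Theorem 6.B.16.b]{stocOrderShakedShantikumar} --- i.e., it asks the reader to re-run the cited proof with the inequalities reversed. You instead reduce the corollary to the \emph{statement} of Lemma~\ref{lem:MonotoneStochOrder} by composing $g$ with negation: $\tilde g = -g$ is increasing, the lemma gives $-g(\mathbf{x}_1,\dots,\mathbf{x}_m) \leq_{\mathrm{st}} -g(\mathbf{y}_1,\dots,\mathbf{y}_m)$, and the scalar equivalence $-\mathbf{a}\leq_{\mathrm{st}}-\mathbf{b} \iff \mathbf{a}\geq_{\mathrm{st}}\mathbf{b}$ finishes the job. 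Your handling of that last equivalence is sound: since $g$ is real-valued, the upper sets are just rays, and the complement-and-reparametrize step (with the right-continuity remark to pass between open and closed rays) is exactly the standard argument. What your approach buys is a fully self-contained, checkable derivation that uses the lemma as a black box rather than asking the reader to trust an analogy with an external proof; what the paper's approach buys is brevity and the fact that it generalizes immediately if one ever needed the vector-valued ($\R^m$-codomain) version of the corollary, where the negation trick still works but the ``upper sets are rays'' simplification does not. For the scalar codomain actually stated in the corollary, your argument is complete and arguably preferable.
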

\begin{proof}
    This follows similarly to the poof of Lemma~\ref{lem:MonotoneStochOrder}~\cite[Theorem 6.B.16.b]{stocOrderShakedShantikumar}.
\end{proof}

In the sequel, we will need to establish a stochastic order between random vectors by establishing individual orders on their components. In order two do so, it is necessary that the underlying order $\le$ be with respect to an orthant cone (which we will assume from now on), and that the random vectors be expressed according to the following conditional form, called the \emph{standard construction} of the vector.
\begin{defn}[Standard construction of a random vector~{\cite[Sec. 6.B.3]{stocOrderShakedShantikumar}}]
    \label{def:stdConstruction}
    Suppose we have a random vector $\mathbf{x}=(\mathbf{x}_1,\dotsc,\mathbf{x}_n)$ with probability measure $\mathbb{P}_{\mathbf{x}}$. 
    If $u_1,\hdots,u_n$ are independently sampled uniform random variables between 0 and 1, i.e. $\mathrm{U}([0,1])$ where

    \begin{subequations}
        \begin{align}
            \Tilde{x}_1 
            &= 
            \inf\{x_1: 
            P_{\mathbf{x}_1}(
            x_1)\geq u_1
            \},\\ 
            \Tilde{x}_i 
            &= 
            \inf\{x_i: 
            P_{\mathbf{x}_i| \mathbf{x}_1 = \Tilde{x}_1,\hdots,\mathbf{x}_{i-1}=\Tilde{x}_{i-1}}(x_i)
            \geq u_i\},
        \end{align}
    \end{subequations}
    for $i=\{1,\hdots,n\}$ result in the samples of $\Tilde{\mathbf{x}}$, then we say $\Tilde{x}$ is the standard construction of $\mathbf{x}$ where $\Tilde{\mathbf{x}}=_{\mathrm{st}}\mathbf{x}$. 
\end{defn}
\begin{lem}[Natural construction of stochastic order]
\label{lem:naturalConstruction}
Suppose we have two random vectors $\mathbf{x}$ and $\mathbf{y}$ with respective standard constructions $\Tilde{x}$ and $\Tilde{y}$ via Definition~\ref{def:stdConstruction}. 
If  $\Tilde{x}_i \leq \Tilde{y}_i, \forall i=\{1,\hdots,n\}$ for every sample, then $\mathbf{x}\leq_{\mathrm{st}}\mathbf{y}$. 
\end{lem}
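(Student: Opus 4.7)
The plan is to exploit the coupling provided by the standard construction to turn the hypothesized sample-wise inequality into the required inequality of probabilities on upper sets. First, I would place $\tilde{\mathbf{x}}$ and $\tilde{\mathbf{y}}$ on a common probability space by building them from the \emph{same} vector of independent uniforms $(u_1,\dotsc,u_n)$ using Definition~\ref{def:stdConstruction}. By that definition, $\tilde{\mathbf{x}}=_{\mathrm{st}}\mathbf{x}$ and $\tilde{\mathbf{y}}=_{\mathrm{st}}\mathbf{y}$, so for every measurable $A\subseteq\R^n$ we have $P_{\mathbf{x}}(A)=P_{\tilde{\mathbf{x}}}(A)$ and $P_{\mathbf{y}}(A)=P_{\tilde{\mathbf{y}}}(A)$. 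It therefore suffices to prove $P_{\tilde{\mathbf{x}}}(U)\le P_{\tilde{\mathbf{y}}}(U)$ for every upper set $U$.

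Next I would use the defining property of an upper set together with the hypothesis. By assumption, for every realization $\omega$ of $(u_1,\dotsc,u_n)$ we have $\tilde{x}_i(\omega)\le \tilde{y}_i(\omega)$ for all $i$, i.e.\ $\tilde{\mathbf{x}}(\omega)\le\tilde{\mathbf{y}}(\omega)$ componentwise with respect to the orthant order. Now fix any upper set $U\subseteq\R^n$; if $\tilde{\mathbf{x}}(\omega)\in U$, then since $\tilde{\mathbf{x}}(\omega)\le\tilde{\mathbf{y}}(\omega)$ and $U$ is upward closed, we obtain $\tilde{\mathbf{y}}(\omega)\in U$. Hence the event inclusion
\begin{equation*}
\{\omega:\tilde{\mathbf{x}}(\omega)\in U\}\subseteq\{\omega:\tilde{\mathbf{y}}(\omega)\in U\}
\end{equation*}
holds on the common probability space, and monotonicity of $\mathbb{P}$ yields $P_{\tilde{\mathbf{x}}}(U)\le P_{\tilde{\mathbf{y}}}(U)$.

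Finally I would stitch everything together: combining the event inclusion with $\tilde{\mathbf{x}}=_{\mathrm{st}}\mathbf{x}$ and $\tilde{\mathbf{y}}=_{\mathrm{st}}\mathbf{y}$ gives $P_{\mathbf{x}}(U)\le P_{\mathbf{y}}(U)$ for every upper set $U$, which is exactly the characterization of $\mathbf{x}\le_{\mathrm{st}}\mathbf{y}$ used in the paper.

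The main obstacle I anticipate is the justification of the coupling step: one must argue that it is legitimate to realize $\tilde{\mathbf{x}}$ and $\tilde{\mathbf{y}}$ simultaneously from the \emph{same} uniforms without disturbing their marginals, and that the hypothesis "$\tilde{x}_i\le\tilde{y}_i$ for every sample" is to be read with respect to this shared coupling (so that the pointwise inequality genuinely holds $\omega$-by-$\omega$ rather than only in distribution). Once this is articulated carefully—invoking Definition~\ref{def:stdConstruction} applied to each marginal separately—the rest of the argument is an essentially routine consequence of the upper-set characterization of $\le_{\mathrm{st}}$.
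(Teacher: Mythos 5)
Your argument is correct, and it is essentially the standard coupling proof of this fact (it is, in substance, one direction of Theorem 6.B.1 in the Shaked--Shanthikumar reference the paper cites for the standard construction). The paper itself states this lemma \emph{without} any proof, so there is nothing to compare against line by line; your write-up supplies the omitted argument in exactly the way the authors evidently intend, since their later use of the lemma (e.g.\ in the proof of Proposition~\ref{prop:cond2MixedMonotone}) implicitly relies on realizing both standard constructions from the same uniforms. The one point you rightly flag --- that ``$\Tilde{x}_i \leq \Tilde{y}_i$ for every sample'' must be read as a pointwise inequality under the shared-uniform coupling, not merely a distributional statement --- is indeed the only delicate step, and your resolution (apply Definition~\ref{def:stdConstruction} to each vector with the same $(u_1,\dotsc,u_n)$, note that this preserves both marginals, then use upward-closedness of $U$ to get the event inclusion and monotonicity of $\mathbb{P}$) is exactly right. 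No gaps.
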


\section{Towards a Stochastic-order Theory of Monotone Dynamics}
\label{sec:mixedMonoStoc}

With the meaning of order between random vectors-- that is, the stochastic order-- established, we now turn to the task of extending mixed-monotonicity to the stochastic order case. Recall that the two of the three defining qualities of a mixed-monotone decomposition function are expressed with respect to a deterministic vector partial order $\le$; a sensible \emph{ansatz} for a stochastic extension is to replace this deterministic partial order with the stochastic order that it underlies. With this notion, we propose the following definition for a stochastic mixed-monotone system.

\begin{defn}
\label{def:mixedMonotoneStochastic}
Given a continuous function, $g: \R^{n_x}\times \R^{n_w} \times \R^{n_x}\times \R^{n_w}\times \R^{n_w} \rightarrow \R^{n_x}$, a stochastic system, \eqref{eq:nonlinearSys}, is mixed-monotone with respect to $g$ via the usual stochastic order, provided it satisfies the following conditions,
\begin{enumerate}
    \item Given independent random vectors $\mathbf{x}$ and $\mathbf{w}$, the nonlinear system, $f$, and the decomposition $g$, then $f$ and $g$ are equal in distribution, $f(\mathbf{x},\mathbf{w}) =_{\mathrm{st}} g(\mathbf{x},\mathbf{w},\mathbf{x},\mathbf{w})$.
    \item Suppose we have independent random vectors $\mathbf{x}$, $\hat{\mathbf{x}}$, $\mathbf{w}$,  $\hat{\mathbf{w}}$, $\mathbf{y}$, $\mathbf{v}$. 
    Given $\mathbf{x}\leq_{\mathrm{st}} \mathbf{y}$, and $\mathbf{w}\leq_{\mathrm{st}} \mathbf{v}$, results in $g(\mathbf{x},\mathbf{w},\hat{\mathbf{x}}, \hat{\mathbf{w}})\leq_{\mathrm{st}} g(\mathbf{y},\mathbf{v},\hat{\mathbf{x}}, \hat{\mathbf{w}})$.
    \item Suppose we have independent random vectors $\mathbf{x}$, $\hat{\mathbf{x}}$, $\mathbf{w}$,  $\hat{\mathbf{w}}$, $\hat{\mathbf{y}}$, $\hat{\mathbf{v}}$. 
    Given $\hat{\mathbf{x}}\leq_{\mathrm{st}} \hat{\mathbf{y}}$, and $\hat{\mathbf{w}}\leq_{\mathrm{st}} \hat{\mathbf{v}}$, results in $g(\mathbf{x},\mathbf{w},\hat{\mathbf{y}}, \hat{\mathbf{v}})\leq_{\mathrm{st}} g(\mathbf{x},\mathbf{w},\hat{\mathbf{x}}, \hat{\mathbf{w}})$.
\end{enumerate}
\end{defn}

With a definition in hand, we are compelled to consider: do any stochastic mixed-monotone systems, as we have defined them, exist? Indeed, they abound. Here is one way to construct one. Take a deterministic mixed-monotone system, and replace the initial state vector with a random vector, and likewise for the disturbances. The resulting stochastic system is a stochastic mixed-monotone system. The following results demonstrate this fact.

\begin{prop}[Condition 1 in Definition~\ref{def:mixedMonotone}]
Given independent random vectors $\mathbf{x}$ and $\mathbf{w}$ with respective standard constructions $\Tilde{x}$ and $\Tilde{w}$, the nonlinear system, $f$, and its decomposition function, $g$. 
If $f_i(\Tilde{x},\Tilde{w}) = g_i(\Tilde{x},\Tilde{w},\Tilde{x},\Tilde{w})$, then $f(\mathbf{x},\mathbf{w}) =_{\mathrm{st}} g(\mathbf{x},\mathbf{w},\mathbf{x},\mathbf{w})$.
    
\end{prop}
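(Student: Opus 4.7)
The plan is to use the standard construction as a bridge between the pointwise identity given in the hypothesis and the required equality in distribution of the two random vectors. Concretely, I would proceed in three steps: transfer $f(\mathbf{x},\mathbf{w})$ and $g(\mathbf{x},\mathbf{w},\mathbf{x},\mathbf{w})$ to their standard-construction counterparts, invoke the pointwise identity there, and then transfer back.

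First, I would observe that by Definition~\ref{def:stdConstruction} we have $\Tilde{x}=_{\mathrm{st}}\mathbf{x}$ and $\Tilde{w}=_{\mathrm{st}}\mathbf{w}$. Because $\mathbf{x}$ and $\mathbf{w}$ are independent, and because the standard constructions are built from disjoint families of uniform random variables, $\Tilde{x}$ and $\Tilde{w}$ are also independent. Hence the joint distributions coincide, $(\Tilde{x},\Tilde{w}) =_{\mathrm{st}} (\mathbf{x},\mathbf{w})$. Applying the continuous (in particular measurable) map $f$ to both sides preserves equality in distribution, so $f(\mathbf{x},\mathbf{w}) =_{\mathrm{st}} f(\Tilde{x},\Tilde{w})$. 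Likewise, applying the map $(u,v)\mapsto g(u,v,u,v)$, which is measurable since $g$ is, gives $g(\mathbf{x},\mathbf{w},\mathbf{x},\mathbf{w}) =_{\mathrm{st}} g(\Tilde{x},\Tilde{w},\Tilde{x},\Tilde{w})$. Here it matters that the same realization of $\mathbf{x}$ (and of $\mathbf{w}$) appears in both slots of $g$, which is exactly what the single-variable substitution guarantees.

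Next, I would use the hypothesis that $f_i(\Tilde{x},\Tilde{w}) = g_i(\Tilde{x},\Tilde{w},\Tilde{x},\Tilde{w})$ for every component $i$. Since this is a sample-wise equality, stacking the components gives $f(\Tilde{x},\Tilde{w}) = g(\Tilde{x},\Tilde{w},\Tilde{x},\Tilde{w})$ as random vectors, hence trivially $f(\Tilde{x},\Tilde{w}) =_{\mathrm{st}} g(\Tilde{x},\Tilde{w},\Tilde{x},\Tilde{w})$. Chaining the three equalities-in-distribution,
\[
f(\mathbf{x},\mathbf{w}) =_{\mathrm{st}} f(\Tilde{x},\Tilde{w}) =_{\mathrm{st}} g(\Tilde{x},\Tilde{w},\Tilde{x},\Tilde{w}) =_{\mathrm{st}} g(\mathbf{x},\mathbf{w},\mathbf{x},\mathbf{w}),
\]
yields the claimed conclusion.

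The main subtlety, and the only place the proof requires care, is the second transfer step: one must avoid treating the two appearances of $\mathbf{x}$ in $g(\mathbf{x},\mathbf{w},\mathbf{x},\mathbf{w})$ as independent copies. By viewing $g(\cdot,\cdot,\cdot,\cdot)$ evaluated at the diagonal as a single measurable function of the pair $(\mathbf{x},\mathbf{w})$, this coupling is preserved automatically, and the equality in joint distribution $(\Tilde{x},\Tilde{w}) =_{\mathrm{st}} (\mathbf{x},\mathbf{w})$ is precisely what is needed. No stronger property of $g$, and in particular none of the monotonicity conditions of Definition~\ref{def:mixedMonotoneStochastic}, is needed for this first proposition: Condition~1 lifts from the deterministic to the stochastic setting purely through measurability and the equality-in-distribution of the standard construction.
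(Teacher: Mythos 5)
Your proof is correct and follows essentially the same route as the paper's (which simply asserts that the claim follows from the standard construction of Definition~\ref{def:stdConstruction} together with the deterministic identity $f=g$ on the diagonal); you have merely filled in the details of transferring the pointwise identity on $(\Tilde{x},\Tilde{w})$ back to $(\mathbf{x},\mathbf{w})$ via equality in distribution. Your remark about viewing the diagonal evaluation of $g$ as a single measurable function of the pair, so as not to accidentally decouple the two appearances of $\mathbf{x}$, is a worthwhile clarification that the paper leaves implicit.
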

\begin{proof}
    This follows directly by the standard construction of a random vector in Definition~\ref{def:stdConstruction} and the construction of the deterministic variant of the decomposition in Definition~\ref{def:mixedMonotone}.
\end{proof}

\begin{prop}[Condition 2 in Definition~\ref{def:mixedMonotone}]
    \label{prop:cond2MixedMonotone}
    Suppose we have independent random vectors $\mathbf{x}$, $\hat{\mathbf{x}}$, $\mathbf{w}$,  $\hat{\mathbf{w}}$, $\mathbf{y}$, $\mathbf{v}$ with their respective standard constructions, the nonlinear system, $f$, and its decomposition function, $g$. 
    If $g(\Tilde{x},\Tilde{w},\hat{\Tilde{x}}, \hat{\Tilde{w}})\leq g(\Tilde{y},\Tilde{v},\hat{\Tilde{x}}, \hat{\Tilde{w}})$ where $\Tilde{x}\leq \Tilde{y}$ and $\Tilde{w}\leq \Tilde{v}$, then $g(\mathbf{x},\mathbf{w},\hat{\mathbf{x}}, \hat{\mathbf{w}})\leq_{\mathrm{st}} g(\mathbf{y},\mathbf{v},\hat{\mathbf{x}}, \hat{\mathbf{w}})$ where $\mathbf{x}\leq_{\mathrm{st}} \mathbf{y}$ and $\mathbf{w}\leq_{\mathrm{st}} \mathbf{v}$. 
\end{prop}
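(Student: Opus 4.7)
The plan is to leverage the standard construction (Definition~\ref{def:stdConstruction}) as a coupling device that turns the stochastic orderings $\mathbf{x}\leq_{\mathrm{st}}\mathbf{y}$ and $\mathbf{w}\leq_{\mathrm{st}}\mathbf{v}$ into pointwise, sample-by-sample inequalities. Once I have pointwise inequalities on the inputs, the deterministic hypothesis on $g$ gives a pointwise inequality on the outputs, and Lemma~\ref{lem:naturalConstruction} immediately promotes that back to a stochastic order on $g(\mathbf{x},\mathbf{w},\hat{\mathbf{x}}, \hat{\mathbf{w}})$ versus $g(\mathbf{y},\mathbf{v},\hat{\mathbf{x}}, \hat{\mathbf{w}})$.

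First, I would realize all six random vectors on a common probability space via their standard constructions, taking care to drive $\Tilde{x}$ and $\Tilde{y}$ from a common independent uniform sample $u^{(x)}$, and $\Tilde{w}$ and $\Tilde{v}$ from a common $u^{(w)}$, while using fresh, independent uniforms for $\hat{\Tilde{x}}$ and $\hat{\Tilde{w}}$. Since $\mathbf{x}$, $\mathbf{w}$, $\hat{\mathbf{x}}$, $\hat{\mathbf{w}}$ are mutually independent, the joint distribution of $(\Tilde{x},\Tilde{w},\hat{\Tilde{x}},\hat{\Tilde{w}})$ equals that of $(\mathbf{x},\mathbf{w},\hat{\mathbf{x}},\hat{\mathbf{w}})$, and analogously after swapping in $(\mathbf{y},\mathbf{v})$. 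I would then invoke the converse direction of Lemma~\ref{lem:naturalConstruction}, which is implicit in Definition~\ref{def:stdConstruction}: under $\mathbf{x}\leq_{\mathrm{st}}\mathbf{y}$ the shared-uniform coupling makes $\Tilde{x}\leq\Tilde{y}$ on every sample, because each (conditional) quantile used to define $\Tilde{y}$ dominates the corresponding one for $\Tilde{x}$. The same reasoning yields $\Tilde{w}\leq\Tilde{v}$ pointwise.

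Second, having established the sample-wise inequalities on the inputs, I would apply the deterministic hypothesis of the proposition sample-by-sample to conclude $g(\Tilde{x},\Tilde{w},\hat{\Tilde{x}},\hat{\Tilde{w}})\le g(\Tilde{y},\Tilde{v},\hat{\Tilde{x}},\hat{\Tilde{w}})$ almost surely. Then Lemma~\ref{lem:naturalConstruction} applied directly to the two $\R^{n_x}$-valued coupled random vectors gives
\[
g(\Tilde{x},\Tilde{w},\hat{\Tilde{x}},\hat{\Tilde{w}}) \leq_{\mathrm{st}} g(\Tilde{y},\Tilde{v},\hat{\Tilde{x}},\hat{\Tilde{w}}),
\]
and since these are equal in distribution to $g(\mathbf{x},\mathbf{w},\hat{\mathbf{x}},\hat{\mathbf{w}})$ and $g(\mathbf{y},\mathbf{v},\hat{\mathbf{x}},\hat{\mathbf{w}})$ respectively by the joint-distribution identity noted above, the desired stochastic inequality follows.

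The main obstacle I expect is the second step of paragraph two: rigorously justifying that the shared-uniform coupling of standard constructions turns $\leq_{\mathrm{st}}$ into a.s. componentwise $\leq$ in the multivariate setting. In the univariate case this is the classical quantile coupling and is immediate from $P_{\mathbf{y}}((-\infty,t]) \le P_{\mathbf{x}}((-\infty,t])$. In the multivariate case one has to inductively verify that each conditional quantile inversion in Definition~\ref{def:stdConstruction} preserves ordering, which amounts to showing that, along the coupling, the conditional CDFs of $\mathbf{y}$ remain dominated by those of $\mathbf{x}$ at every stage---a property that is inherited from $\mathbf{x}\leq_{\mathrm{st}}\mathbf{y}$ but deserves explicit citation to~\cite[Sec.~6.B.3]{stocOrderShakedShantikumar}.
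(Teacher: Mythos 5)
Your proof is correct in substance but takes a different route from the paper's. The paper's proof works in two black-box steps: it first uses Lemma~\ref{lem:naturalConstruction} in its stated (forward) direction to pass from the pointwise orders $\Tilde{x}\leq\Tilde{y}$, $\Tilde{w}\leq\Tilde{v}$ to the stochastic orders $\mathbf{x}\leq_{\mathrm{st}}\mathbf{y}$, $\mathbf{w}\leq_{\mathrm{st}}\mathbf{v}$, and then invokes Lemma~\ref{lem:MonotoneStochOrder} (preservation of stochastic order under increasing functions, with the deterministic hypothesis on $g$ supplying the required monotonicity) to conclude. You instead build an explicit quantile coupling, push the deterministic inequality through $g$ sample-by-sample, and then lift the almost-sure output inequality back to a stochastic order---in effect re-proving the needed instance of Lemma~\ref{lem:MonotoneStochOrder} rather than citing it. Your route is more self-contained and makes the handling of the held-fixed arguments $\hat{\mathbf{x}},\hat{\mathbf{w}}$ explicit (shared fresh uniforms on both sides), which the paper glosses over; the paper's route is shorter and avoids the multivariate conditional-quantile domination argument you correctly flag as the delicate step. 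Two small cautions: first, your opening move (inverting $\leq_{\mathrm{st}}$ into a pointwise coupling) is not actually needed here, since the proposition's hypothesis already hands you $\Tilde{x}\leq\Tilde{y}$ and $\Tilde{w}\leq\Tilde{v}$ for the standard constructions; second, your final appeal to Lemma~\ref{lem:naturalConstruction} on the outputs slightly overreaches its literal statement, because $g(\Tilde{x},\Tilde{w},\hat{\Tilde{x}},\hat{\Tilde{w}})$ is equal in distribution to $g(\mathbf{x},\mathbf{w},\hat{\mathbf{x}},\hat{\mathbf{w}})$ but is not its standard construction---you should instead cite the general coupling characterization of the usual stochastic order (\cite[Theorem 6.B.1]{stocOrderShakedShantikumar}), of which Lemma~\ref{lem:naturalConstruction} is the special case. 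Neither issue is a genuine gap; both are easily repaired.
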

\begin{proof}
    Provided that we have standard constructions of all the random vectors involved via Definition~\ref{def:stdConstruction}, we know that $\Tilde{x}\leq \Tilde{y}$ and $\Tilde{w}\leq \Tilde{v}$ satisfy Lemma~\ref{lem:naturalConstruction}, then $\mathbf{x}\leq_{\mathrm{st}}\mathbf{y}$ and $\mathbf{w}\leq_{\mathrm{st}} \mathbf{v}$. 
    Given the stochastic orders by natural constructions and we assume them being independent, by Lemma~\ref{lem:MonotoneStochOrder}, $g(\mathbf{x},\mathbf{w},\hat{\mathbf{x}}, \hat{\mathbf{w}})\leq_{\mathrm{st}} g(\mathbf{y},\mathbf{v},\hat{\mathbf{x}}, \hat{\mathbf{w}})$. 
\end{proof}

\begin{corr}[Condition 3 in Definition~\ref{def:mixedMonotone}]
    Suppose we have independent random vectors $\mathbf{x}$, $\hat{\mathbf{x}}$, $\mathbf{w}$,  $\hat{\mathbf{w}}$, $\hat{\mathbf{y}}$, $\hat{\mathbf{v}}$ with their respective standard constructions, the nonlinear system, $f$, and its decomposition function, $g$. 
    If $g(\Tilde{x},\Tilde{w},\hat{\Tilde{y}}, \hat{\Tilde{v}})\leq g(\Tilde{x},\Tilde{w},\hat{\Tilde{x}}, \hat{\Tilde{w}})$ where $\hat{\Tilde{x}}\leq \hat{\Tilde{y}}$ and $\hat{\Tilde{w}}\leq \hat{\Tilde{v}}$, then $g(\mathbf{x},\mathbf{w},\hat{\mathbf{y}}, \hat{\mathbf{v}})\leq_{\mathrm{st}} g(\mathbf{x},\mathbf{w},\hat{\mathbf{x}}, \hat{\mathbf{w}})$ where $\hat{\mathbf{x}}\leq_{\mathrm{st}} \hat{\mathbf{y}}$ and $\hat{\mathbf{w}} \leq_{\mathrm{st}} \hat{\mathbf{v}}$.
\end{corr}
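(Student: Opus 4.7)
The plan is to mirror the proof of Proposition~\ref{prop:cond2MixedMonotone} nearly verbatim, substituting Corollary~\ref{corr:MonotoneStochOrderRev} for Lemma~\ref{lem:MonotoneStochOrder} so that the direction of the stochastic inequality reverses, consistent with the hypothesis here asserting that $g$ is \emph{decreasing}, rather than increasing, in its last two arguments. Because the conclusion concerns only stochastic ordering (not path-by-path dominance), the argument reduces cleanly to its deterministic analog on the standard constructions.

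First I would realize all six random vectors through their standard constructions $\Tilde{x},\Tilde{w},\hat{\Tilde{x}},\hat{\Tilde{w}},\hat{\Tilde{y}},\hat{\Tilde{v}}$ via Definition~\ref{def:stdConstruction}. Since the hypotheses $\hat{\Tilde{x}}\leq \hat{\Tilde{y}}$ and $\hat{\Tilde{w}}\leq \hat{\Tilde{v}}$ are assumed to hold for every sample, Lemma~\ref{lem:naturalConstruction} immediately produces $\hat{\mathbf{x}}\leq_{\mathrm{st}} \hat{\mathbf{y}}$ and $\hat{\mathbf{w}}\leq_{\mathrm{st}}\hat{\mathbf{v}}$, which is precisely the side information advertised in the statement. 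Next, the deterministic hypothesis $g(\Tilde{x},\Tilde{w},\hat{\Tilde{y}},\hat{\Tilde{v}})\leq g(\Tilde{x},\Tilde{w},\hat{\Tilde{x}},\hat{\Tilde{w}})$ under $\hat{\Tilde{x}}\leq \hat{\Tilde{y}}$ and $\hat{\Tilde{w}}\leq \hat{\Tilde{v}}$ states exactly that, with the first two slots frozen to arbitrary values, the map $(\hat{x},\hat{w})\mapsto g(\Tilde{x},\Tilde{w},\hat{x},\hat{w})$ is decreasing in the componentwise order. Paired with the asserted independence of the six random vectors, Corollary~\ref{corr:MonotoneStochOrderRev} then yields $g(\mathbf{x},\mathbf{w},\hat{\mathbf{y}},\hat{\mathbf{v}})\leq_{\mathrm{st}} g(\mathbf{x},\mathbf{w},\hat{\mathbf{x}},\hat{\mathbf{w}})$, which is the claim.

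The main obstacle I anticipate is a small bookkeeping issue: Corollary~\ref{corr:MonotoneStochOrderRev} is phrased for scalar-valued decreasing functions of independent inputs, whereas we are applying it slotwise to a vector-valued $g$ with the first two arguments $(\mathbf{x},\mathbf{w})$ serving as auxiliary randomness rather than as the arguments whose order we are inverting. To keep the invocation clean, I would either argue coordinate-by-coordinate on each component $g_i$, or condition on $(\mathbf{x},\mathbf{w})$ and then average, invoking the equivalent characterization of the stochastic order via expectations of indicators of upper sets (Definition~\ref{def:stdConstruction}) to see that the pointwise inequality on the conditional measures survives integration against the joint law of $(\mathbf{x},\mathbf{w})$. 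Everything else is an immediate reuse of the machinery already developed in Proposition~\ref{prop:cond2MixedMonotone}, which is exactly why this statement is presented as a corollary rather than an independent result.
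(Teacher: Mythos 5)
Your proposal follows exactly the route the paper intends: it mirrors the proof of Proposition~\ref{prop:cond2MixedMonotone}, using Lemma~\ref{lem:naturalConstruction} on the standard constructions to obtain the stochastic orders and then invoking Corollary~\ref{corr:MonotoneStochOrderRev} in place of Lemma~\ref{lem:MonotoneStochOrder} to reverse the inequality. The paper's own proof is a one-line deferral to that same argument, so your write-up is, if anything, more detailed (your remarks on the slotwise/conditioning bookkeeping go beyond what the paper records) but not a different approach.
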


\begin{proof}
    Arguments follow the proof in Theorem~\ref{prop:cond2MixedMonotone}.
\end{proof}
 
Having established the existence of stochastic mixed-monotone systems, we turn to the order preservation properties of their trajectories.
First, note that the random vectors $\hat{\mathbf{x}}$, $\hat{\mathbf{w}}$, $\mathbf{y}$, and $\mathbf{v}$ and their relationship to $\mathbf{x}$ and $\mathbf{w}$ are absent in Definition~\ref{def:mixedMonotoneStochastic}, which is crucial for an initializing an execution. 
The following result provides a way prove that two random vectors are equal.
\begin{lem}[Construction of $\hat{\mathbf{x}}$ via $\mathbf{x}$]
If we have random vectors $\mathbf{x}$ and $\hat{\mathbf{x}}$ such that $\mathbf{x}\leq_{\mathrm{st}}\hat{\mathbf{x}}$ and $\mathbb{E}[h_i(\mathbf{x}_i)] = \mathbb{E}[h_i(\mathbf{y}_i)]$ for strictly increasing functions, $h_i: \R \rightarrow \R,\ i=\{1,\hdots,n\}$, then $\mathbf{x} = _{\mathrm{st}} \hat{\mathbf{x}}$. 
\end{lem}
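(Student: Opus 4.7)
The plan is to reduce the joint assertion to its scalar marginals, use the hypothesis on the strictly increasing $h_i$ to force equality of marginals, and then promote marginal equality to joint equality via a Strassen-type coupling argument. Throughout I read the hypothesis $\mathbb{E}[h_i(\mathbf{y}_i)]$ as the typographically intended $\mathbb{E}[h_i(\hat{\mathbf{x}}_i)]$, since $\mathbf{y}$ is otherwise undefined in the statement.

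First, I would extract scalar marginal stochastic orders from the joint order: for each $i$, the half-spaces $\{x \in \R^n : x_i \geq \alpha\}$ are upper sets with respect to the componentwise partial order, so $\mathbf{x} \leq_{\mathrm{st}} \hat{\mathbf{x}}$ immediately gives $\mathbf{x}_i \leq_{\mathrm{st}} \hat{\mathbf{x}}_i$. Applying Lemma~\ref{lem:MonotoneStochOrder} to the scalar increasing map $h_i$ yields $h_i(\mathbf{x}_i) \leq_{\mathrm{st}} h_i(\hat{\mathbf{x}}_i)$.

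Next I would invoke the classical scalar fact that $Z_1 \leq_{\mathrm{st}} Z_2$ with $\mathbb{E}[Z_1] = \mathbb{E}[Z_2] < \infty$ implies $Z_1 =_{\mathrm{st}} Z_2$: the CDF difference $F_{Z_1} - F_{Z_2}$ is pointwise nonnegative by the stochastic order and its integral equals $\mathbb{E}[Z_2] - \mathbb{E}[Z_1] = 0$, so it vanishes almost everywhere. This yields $h_i(\mathbf{x}_i) =_{\mathrm{st}} h_i(\hat{\mathbf{x}}_i)$, and since $h_i$ is a strictly increasing (hence Borel-invertible on its range) map, pushing back through $h_i^{-1}$ gives $\mathbf{x}_i =_{\mathrm{st}} \hat{\mathbf{x}}_i$ coordinatewise.

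The delicate step, which I expect to be the main obstacle, is passing from equality of all marginals back to joint equality, since two random vectors with identical marginals may have genuinely different copulas. Here the hypothesis $\mathbf{x} \leq_{\mathrm{st}} \hat{\mathbf{x}}$ re-enters through Strassen's coupling theorem: there is a probability space carrying $\mathbf{x}', \hat{\mathbf{x}}'$ with $\mathbf{x}' =_{\mathrm{st}} \mathbf{x}$, $\hat{\mathbf{x}}' =_{\mathrm{st}} \hat{\mathbf{x}}$, and $\mathbf{x}' \leq \hat{\mathbf{x}}'$ componentwise almost surely. For each $i$ the variable $h_i(\hat{\mathbf{x}}'_i) - h_i(\mathbf{x}'_i)$ is then nonnegative with mean zero by the marginal equality just established, forcing it to vanish a.s., and strict monotonicity of $h_i$ upgrades this to $\mathbf{x}'_i = \hat{\mathbf{x}}'_i$ a.s. Combining across coordinates gives $\mathbf{x}' = \hat{\mathbf{x}}'$ almost surely, whence $\mathbf{x} =_{\mathrm{st}} \hat{\mathbf{x}}$. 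An alternative that stays closer to the paper's own machinery would replace Strassen by the conditional-quantile coupling of Definition~\ref{def:stdConstruction}, verifying directly that the standard constructions of $\mathbf{x}$ and $\hat{\mathbf{x}}$ driven by common uniforms realize the same componentwise-almost-sure ordering.
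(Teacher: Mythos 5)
Your proof is correct, and it is in substance the same route as the paper: the paper's entire proof is the one-line citation ``Follows directly from \cite[Theorem 6.B.19]{stocOrderShakedShantikumar}'', which is precisely the statement being proved, and your argument (a.s.\ monotone coupling via Strassen / the equivalent characterization of $\leq_{\mathrm{st}}$, then observing that $h_i(\hat{\mathbf{x}}'_i)-h_i(\mathbf{x}'_i)$ is nonnegative with zero mean and invoking strict monotonicity of $h_i$) is the standard proof of that cited theorem. You also correctly diagnosed the typo $\mathbf{y}_i \mapsto \hat{\mathbf{x}}_i$ in the hypothesis. The only inefficiency is your intermediate step deducing $\mathbf{x}_i =_{\mathrm{st}} \hat{\mathbf{x}}_i$ from the scalar CDF argument: it is not needed, since the zero-mean property of $h_i(\hat{\mathbf{x}}'_i)-h_i(\mathbf{x}'_i)$ already follows directly from the expectation hypothesis and the coupling, and the coordinatewise a.s.\ equality then delivers the joint equality in distribution without any separate treatment of copulas.
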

\begin{proof}
    Follows directly from~\cite[Theorem 6.B.19]{stocOrderShakedShantikumar}. 
\end{proof}
Thus, having an initial $\mathbf{x}_0$ is sufficient to construct an initial $\hat{\mathbf{x}}_0$. 
A similar argument results for the disturbances, where having intermediate random vectors $\mathbf{w}_k$ is sufficient for constructing $\hat{\mathbf{w}}_k$ for $k=\{0,\hdots,N-1\}$.
As Definition~\ref{def:mixedMonotoneStochastic} alludes, an execution of the mixed-monotone system for a finite horizon $N$ should preserve the monotonicity via the usual stochastic order.
We now prove the execution of a stochastic, mixed monotone system to recover interval reach sets, we first define southeast order and define a stochastic order variant of it.
\begin{defn}[Southeast order~\cite{coogan2020mixed}]
    For $(x,\hat x),(y,\hat y)\in\R^{2n}$, $(x,\hat x)\leq_{\mathrm{SE}}(y,\hat y)$ means $x\leq y$ and $\hat x\geq\hat y$.
\end{defn}
\begin{defn}
        For $(\mathbf{x},\mathbf{\hat x}),(\mathbf{y},\mathbf{\hat y})\in\R^{2n}$, $(\mathbf{x},\mathbf{\hat x})\leq_{\mathrm{stSE}}(\mathbf{y},\mathbf{\hat y})$ means $\mathbf{x}\leq_{\mathrm{st}}\mathbf{y}$ and $\mathbf{\hat x}\geq_{\mathrm{st}}\mathbf{\hat y}$.
\end{defn}
\begin{thm}[Execution of a stochastic mixed-monotone system]
\label{thm:embeddingExecution}
Suppose we have a mixed monotone system of the form, 
\begin{equation}
    \label{eq:monotoneSysDesc}
    \begin{bmatrix}
        \mathbf{x}_{k+1}\\
        \hat{\mathbf{x}}_{k+1}
    \end{bmatrix} = 
    \begin{bmatrix}
    g(\mathbf{x}_k,\mathbf{w}_k,\hat{\mathbf{x}}_k,\hat{\mathbf{w}}_k)\\ 
    g(\hat{\mathbf{x}}_k,\hat{\mathbf{w}}_k,\mathbf{x}_k,\mathbf{w}_k)
    \end{bmatrix}.
\end{equation}
If $(\mathbf{x},\hat{\mathbf{x}})\leq_{\mathrm{stSE}}(\mathbf{x}',\hat{\mathbf{x}}')$ and  $(\mathbf{w},\hat{\mathbf{w}})\leq_{\mathrm{stSE}} (\mathbf{w}',\hat{\mathbf{w}}')$, then $(\mathbf{x}_{k+1},\hat{\mathbf{x}}_{k+1})\leq_{\mathrm{stSE}}(\mathbf{x}'_{k+1},\hat{\mathbf{x}}'_{k+1})$. That is, 
\begin{subequations}
    \begin{align}
        g(\mathbf{x},\mathbf{w},\hat{\mathbf{x}}',\hat{\mathbf{w}}') \leq_{\mathrm{st}} g(\mathbf{x}',\mathbf{w}',\hat{\mathbf{x}},\hat{\mathbf{w}}),\label{eq:monotoneSysUp}\\
        g(\hat{\mathbf{x}},\hat{\mathbf{w}},\mathbf{x}',\mathbf{w}') \geq_{\mathrm{st}} g(\hat{\mathbf{x}}',\hat{\mathbf{w}}',\mathbf{x},\mathbf{w}),\label{eq:monotoneSysDown}
    \end{align}
\end{subequations}
respectively.
\end{thm}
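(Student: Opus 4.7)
The plan is to reduce southeast-order preservation to its two componentwise stochastic inequalities, namely $\mathbf{x}_{k+1} \le_{\mathrm{st}} \mathbf{x}'_{k+1}$ and $\hat{\mathbf{x}}_{k+1} \ge_{\mathrm{st}} \hat{\mathbf{x}}'_{k+1}$, and to establish each one by chaining applications of conditions~2 and~3 of Definition~\ref{def:mixedMonotoneStochastic} through a well-chosen intermediate $g$-evaluation. From the hypotheses $(\mathbf{x},\hat{\mathbf{x}})\leq_{\mathrm{stSE}}(\mathbf{x}',\hat{\mathbf{x}}')$ and $(\mathbf{w},\hat{\mathbf{w}})\leq_{\mathrm{stSE}}(\mathbf{w}',\hat{\mathbf{w}}')$ I first read off the four scalar orderings $\mathbf{x}\le_{\mathrm{st}}\mathbf{x}'$, $\mathbf{w}\le_{\mathrm{st}}\mathbf{w}'$, $\hat{\mathbf{x}}'\le_{\mathrm{st}}\hat{\mathbf{x}}$, and $\hat{\mathbf{w}}'\le_{\mathrm{st}}\hat{\mathbf{w}}$, which are precisely the ingredients needed to trigger the two monotonicity conditions.

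For \eqref{eq:monotoneSysUp}, I would introduce the intermediate $g(\mathbf{x}',\mathbf{w}',\hat{\mathbf{x}},\hat{\mathbf{w}})$ and split the desired inequality into two hops. First, condition~2 gives
\[
g(\mathbf{x},\mathbf{w},\hat{\mathbf{x}},\hat{\mathbf{w}}) \le_{\mathrm{st}} g(\mathbf{x}',\mathbf{w}',\hat{\mathbf{x}},\hat{\mathbf{w}})
\]
via $\mathbf{x}\le_{\mathrm{st}}\mathbf{x}'$ and $\mathbf{w}\le_{\mathrm{st}}\mathbf{w}'$ on the first two slots; second, condition~3 gives
\[
g(\mathbf{x}',\mathbf{w}',\hat{\mathbf{x}},\hat{\mathbf{w}}) \le_{\mathrm{st}} g(\mathbf{x}',\mathbf{w}',\hat{\mathbf{x}}',\hat{\mathbf{w}}')
\]
because $\hat{\mathbf{x}}'\le_{\mathrm{st}}\hat{\mathbf{x}}$ and $\hat{\mathbf{w}}'\le_{\mathrm{st}}\hat{\mathbf{w}}$ mean that shrinking the last two slots pushes $g$ upward in the stochastic order. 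Transitivity of $\le_{\mathrm{st}}$ then closes the chain to yield $\mathbf{x}_{k+1}\le_{\mathrm{st}}\mathbf{x}'_{k+1}$. The argument for \eqref{eq:monotoneSysDown} is completely symmetric: I would route through the intermediate $g(\hat{\mathbf{x}},\hat{\mathbf{w}},\mathbf{x}',\mathbf{w}')$, apply condition~2 to promote $\hat{\mathbf{x}}'\to\hat{\mathbf{x}}$, $\hat{\mathbf{w}}'\to\hat{\mathbf{w}}$ in the first two slots, then apply condition~3 to relax $\mathbf{x}'\to\mathbf{x}$, $\mathbf{w}'\to\mathbf{w}$ in the last two slots, producing $\hat{\mathbf{x}}_{k+1}\ge_{\mathrm{st}}\hat{\mathbf{x}}'_{k+1}$ after one final transitivity step.

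The main obstacle is the independence bookkeeping demanded by conditions~2 and~3 (and ultimately by Lemma~\ref{lem:MonotoneStochOrder}): each hop requires the random vectors fed into $g$ to be mutually independent, but the intermediate $g$-evaluations mix primed and unprimed copies. I plan to dispatch this the same way Proposition~\ref{prop:cond2MixedMonotone} does, namely by instantiating standard constructions (Definition~\ref{def:stdConstruction}) of every random vector involved and then invoking Lemma~\ref{lem:naturalConstruction}. This lets me spawn fresh independent copies whose marginal laws match the prescribed ones and whose pointwise comparisons on the underlying uniform samples transfer, through Lemma~\ref{lem:MonotoneStochOrder}, into the stochastic orderings required for each $g$-evaluation in the chain. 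Once this independence scaffolding is in place, both halves of the southeast inequality fall out from the two-hop template described above.
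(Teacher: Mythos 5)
Your proposal is correct and follows essentially the same route as the paper's proof: both apply condition~2 of Definition~\ref{def:mixedMonotoneStochastic} to order the first two arguments of $g$, condition~3 to order the last two, and chain the results via Lemma~\ref{lem:MonotoneStochOrder}, Corollary~\ref{corr:MonotoneStochOrderRev}, and transitivity of $\leq_{\mathrm{st}}$, with your independence bookkeeping via standard constructions matching what the paper does implicitly through Proposition~\ref{prop:cond2MixedMonotone}. The only divergence is that your chain lands on the componentwise inequality $g(\mathbf{x},\mathbf{w},\hat{\mathbf{x}},\hat{\mathbf{w}})\leq_{\mathrm{st}} g(\mathbf{x}',\mathbf{w}',\hat{\mathbf{x}}',\hat{\mathbf{w}}')$, i.e., the prose conclusion $\mathbf{x}_{k+1}\leq_{\mathrm{st}}\mathbf{x}'_{k+1}$, rather than the literally displayed \eqref{eq:monotoneSysUp} (whose primed and unprimed hatted arguments appear transposed relative to the dynamics \eqref{eq:monotoneSysDesc}), which is arguably the statement the authors intended.
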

\begin{proof}
    By Definition~\ref{def:mixedMonotoneStochastic}, Condition 2, we have 
    \begin{equation}
        g(\mathbf{x},\mathbf{w},\hat{\mathbf{y}},\hat{\mathbf{v}}) \leq_{\mathrm{st}} g(\mathbf{x}',\mathbf{w}',\hat{\mathbf{y}},\hat{\mathbf{v}}),
    \end{equation}
    for $\hat{\mathbf{y}}$ and $\hat{\mathbf{v}}$. 
    Likewise, by Definition~\ref{def:mixedMonotoneStochastic}, Condition 3, we have 
    \begin{equation}
        g(\mathbf{y},\mathbf{v},\hat{\mathbf{x}}',\hat{\mathbf{w}}') \leq_{\mathrm{st}} g(\mathbf{y},\mathbf{v},\hat{\mathbf{x}},\hat{\mathbf{w}})
    \end{equation}
    for $\mathbf{y}$ and $\mathbf{v}$.
    Thus, by Lemma~\ref{lem:MonotoneStochOrder} where $\hat{\mathbf{y}}\geq_{\mathrm{st}}\hat{\mathbf{x}}$ and $\hat{\mathbf{v}}\geq_{\mathrm{st}}\hat{\mathbf{w}}$, thus  $\hat{\mathbf{y}}\geq_{\mathrm{st}}\hat{\mathbf{x}}'$ and $\hat{\mathbf{v}}\geq_{\mathrm{st}}\hat{\mathbf{w}}'$, we have \eqref{eq:monotoneSysUp}. 
    The argument to prove \eqref{eq:monotoneSysDown} follows similarly but uses Corollary~\ref{corr:MonotoneStochOrderRev}. 
\end{proof}

To summarize, the decomposed dynamics are stochastically monotone in the southeast stochastic order. To turn this fact into a computable algorithm that can solve Problem 1, we require an additional result to derive concentration inequalities from monotone functions. This is derived in the next section.

\section{From Stochastic Order to Concentration Inequality}
\label{sec:oneStep}
Having established the foundations of mixed monotonicity with respect to stochastic orders, we use the stochastic order property to establish the concentration inequalities laid out in~\eqref{eq:onestep_concentration} and thereby solve Problem~\ref{prob:one-step-reachability}. The central property we use is that stochastic orders imply a preservation of probability mass for upper sets. Indeed, if $X_k$ in Problem~\ref{prob:one-step-reachability} were allowed to be an upper set, then establishing a stochastic order between $\textbf{x}$ and $f(\textbf{x})$ would immediately solve the problem by the property~\eqref{eq:stochastic_order_definition}; however, since upper sets are by definition unbounded, they are usually inappropriate for the purposes of safety verification. To arrive at a compact estimate for the stochastic reach set, we use an intersection of an upper set and a lower set and use a stochastic order on each part to establish a bound on the probability mass of the intersection.

For our upper and lower sets, we use orthant cones in $\R^{n}$ with respect to the partial orders $\le$ and $\ge$, yielding intersections with interval geometry. For this, we require the following elementary fact about approximation of quantiles for the output of a monotone function.

\begin{lem}
\label{lem:monotone-cone-preservation}
Let $\textbf{x}:\Omega\to\R^n$ be a random variable, $f:\R^n\to\R^m$ a function monotone with respect to the partial order $\le$, and $z\in\R^n$. Then 
$P(f(x) \le f(z)) \ge P(x \le z)$.
\end{lem}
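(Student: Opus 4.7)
The plan is to reduce the probabilistic inequality to a set inclusion between two events in the underlying probability space, after which the claim follows from the elementary fact that probability measures are monotone with respect to set inclusion. Concretely, I will work with the preimage events
\begin{equation*}
    A = \{\omega \in \Omega : \mathbf{x}(\omega) \le z\}, \qquad B = \{\omega \in \Omega : f(\mathbf{x}(\omega)) \le f(z)\},
\end{equation*}
and argue that $A \subseteq B$.

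The inclusion is immediate from the hypothesis that $f$ is monotone with respect to $\le$: for any $\omega \in A$ we have $\mathbf{x}(\omega) \le z$, and applying monotonicity of $f$ pointwise yields $f(\mathbf{x}(\omega)) \le f(z)$, hence $\omega \in B$. Once this set inclusion is in place, monotonicity of the probability measure $\mathbb{P}$ gives $\mathbb{P}(A) \le \mathbb{P}(B)$, which is precisely the claimed inequality $P(\mathbf{x} \le z) \le P(f(\mathbf{x}) \le f(z))$.

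There is essentially no main obstacle here; the only subtlety worth remarking on is a measurability check, namely that both $A$ and $B$ are events in $\mathcal{F}$. This follows because $\mathbf{x}$ is a random variable (so the orthant-cone preimage defining $A$ is measurable) and $f$ is monotone on $\R^n$; monotone functions on $\R^n$ with respect to an orthant order are Borel measurable (each coordinate is monotone in each argument separately on sections, and a standard argument via sublevel sets shows each component of $f$ has Borel-measurable sublevel sets), so $f(\mathbf{x})$ is again a random variable and $B \in \mathcal{F}$. With measurability secured, the proof reduces to the one-line inclusion argument above.
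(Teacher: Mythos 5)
Your proof is correct and is essentially the same argument as the paper's: both rest on the single observation that monotonicity of $f$ gives the event inclusion $\{\omega : \mathbf{x}(\omega) \le z\} \subseteq \{\omega : f(\mathbf{x}(\omega)) \le f(z)\}$, after which one appeals to monotonicity of the measure. The paper merely writes this out in longhand with indicator functions, decomposing $\ind{f(\mathbf{x}(\omega))\le f(z)}$ and discarding a nonnegative remainder term, so your version states the identical idea more cleanly (and incidentally avoids the paper's slightly imprecise treatment of $\mathbf{x}(\omega) > z$ as if it were the complement of $\mathbf{x}(\omega) \le z$ under a partial order, where incomparable points must also be accounted for).
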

\begin{proof}
\begin{subequations}
\begin{align}
    & \qquad \mathbb{P}(\{\omega : f(\textbf{x}(\omega)) \le f(z)\})\\
    &=
    \int \ind{f(\textbf{x}(\omega)) \le f(z)}  
    d\mathbb{P}(\omega)
    \label{eq:vecmono1}\\
    &= 
    \int
    \ind{f(\textbf{x}(\omega)) \le f(z)
    \text{ and } 
    \textbf{x}(\omega) \le z
    }\nonumber\\
    &+
    \ind{f(\textbf{x}(\omega)) \le f(z)
    \text{ and } 
    \textbf{x}(\omega) > z
    }
    d\mathbb{P}(\omega)
    \label{eq:vecmono2}\\
    &= 
    \int
    \ind{x(\omega) \le z}\nonumber\\
    &+
    \ind{f(\textbf{x}(\omega)) \le f(z)
    \text{ and } 
    \textbf{x}(\omega) > z
    }
    d\mathbb{P}(\omega)
    \label{eq:vecmono3}\\
    &\ge
    \int
    \ind{\textbf{x}(\omega) \le z}
    d\mathbb{P}(\omega)
    \label{eq:vecmono4}\\
    &= \mathbb{P}(\{\omega : \textbf{x}(\omega) \le z\}),\label{eq:vecmono5}
\end{align}
\end{subequations}
We note that~\eqref{eq:vecmono3} follows from~\eqref{eq:vecmono2} by monotonicity and from~\eqref{eq:vecmono3} to~\eqref{eq:vecmono4} by the nonnegativity of the indicator.
\end{proof}

Now, suppose we have an interval $[\xlo, \xhi]$ such that $P_X([\xlo,\xhi])\ge 1-\epsilon$. 
By applying Lemma~\ref{lem:monotone-cone-preservation} two times, one for each of the two orthant cones whose intersection forms $[\xlo, \xhi]$, we obtain the following bound on the probability mass of $[f(\xlo), f(\xhi)]$.

\begin{lem}
\label{lem:one-minus-two-delta}
Let $\mathbf{x}:\Omega\to\R^n$ be a random variable, 
$[\xlo, \xhi]$ an interval such that 
$P_\textbf{x}([\xlo,\xhi]) \ge 1-\delta$, 
and $f:\R^n\to\R^m$ a function monotone with respect to the partial order $\le$. Then $P_{f(\textbf{x})}([f(\xlo),f(\xhi)]) \ge 1-2\delta$.
\end{lem}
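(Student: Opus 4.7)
The plan is to view the interval $[\xlo, \xhi]$ as the intersection of two ``orthant events'' and to apply Lemma~\ref{lem:monotone-cone-preservation} separately to each, combining the two high-probability bounds with a union bound. Explicitly, write $[\xlo,\xhi] = \{x : x \le \xhi\} \cap \{x : x \ge \xlo\}$, so that its image under $f$ contains $\{x : f(x)\le f(\xhi)\} \cap \{x : f(x) \ge f(\xlo)\}$ by monotonicity, and this latter intersection is precisely $[f(\xlo), f(\xhi)]$ in the monotone image.

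First I would deduce from the hypothesis $P_{\mathbf{x}}([\xlo,\xhi]) \ge 1-\delta$ that each of the two half-space events already carries mass at least $1-\delta$: since $[\xlo,\xhi] \subseteq \{x : x \le \xhi\}$ and $[\xlo,\xhi] \subseteq \{x : x \ge \xlo\}$, monotonicity of the measure yields $P(\mathbf{x} \le \xhi) \ge 1-\delta$ and $P(\mathbf{x} \ge \xlo) \ge 1-\delta$. Second, I would apply Lemma~\ref{lem:monotone-cone-preservation} with $z = \xhi$ to get $P(f(\mathbf{x}) \le f(\xhi)) \ge P(\mathbf{x} \le \xhi) \ge 1-\delta$. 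For the lower orthant, the same argument goes through with the roles of $\le$ and $\ge$ swapped (the indicator computation in~\eqref{eq:vecmono1}--\eqref{eq:vecmono5} is symmetric in the sense of direction of the order), giving $P(f(\mathbf{x}) \ge f(\xlo)) \ge P(\mathbf{x} \ge \xlo) \ge 1-\delta$. If one wishes to avoid re-deriving this reversed version, one can equivalently apply Lemma~\ref{lem:monotone-cone-preservation} to the function $-f$ composed with negation, using that $f$ monotone with respect to $\le$ is the same as monotone with respect to the opposite order.

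Finally, I would combine the two bounds via Boole's inequality:
\begin{align*}
P_{f(\mathbf{x})}([f(\xlo),f(\xhi)])
&= 1 - P\bigl(\{f(\mathbf{x}) \not\le f(\xhi)\} \cup \{f(\mathbf{x}) \not\ge f(\xlo)\}\bigr)\\
&\ge 1 - P(f(\mathbf{x}) \not\le f(\xhi)) - P(f(\mathbf{x}) \not\ge f(\xlo))\\
&\ge 1 - \delta - \delta = 1 - 2\delta.
\end{align*}

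There is no genuinely hard step here: the argument is essentially ``two one-sided bounds plus Boole''. The only subtlety worth flagging is justifying the ``reverse'' instance of Lemma~\ref{lem:monotone-cone-preservation} on the lower orthant $\{x : x \ge \xlo\}$, which is a direct symmetric variant of the given proof and should be stated cleanly to avoid confusion about which direction of monotonicity is being invoked.
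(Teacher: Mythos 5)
Your proposal is correct and follows essentially the same route as the paper: derive the two one-sided bounds by applying Lemma~\ref{lem:monotone-cone-preservation} to each orthant, then combine them (your union bound on complements is algebraically identical to the paper's inclusion--exclusion step $\mathbb{P}(A\cap B) \ge \mathbb{P}(A)+\mathbb{P}(B)-1$). Your explicit flagging of the reversed-order instance of Lemma~\ref{lem:monotone-cone-preservation} for the lower orthant is a point the paper passes over silently, and is worth stating.
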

\begin{proof}
Since $P_\textbf{x}([\xlo,\xhi])\ge 1-\epsilon$, it follows that 
$\mathbb{P}(\{\omega : \textbf{x}(\omega) \le \xhi\}) \ge 1-\epsilon$
and
$\mathbb{P}(\{\omega: \textbf{x}(\omega) \ge \xlo\}) \ge 1-\epsilon$. In turn, from Lemma~\ref{lem:monotone-cone-preservation}
we find that
$\mathbb{P}(\{\xi : f(\textbf{x}(\omega) \le f(\xhi)\}) \ge 1-\epsilon$
and
$\mathbb{P}(\{\omega : f(\textbf{x}(\omega)) \ge f(\xlo)\}) \ge 1-\epsilon$.
Then, from the additive measure property
$\mathbb{P}(A\cup B) = \mathbb{P}(A) + \mathbb{P}(B) - \mathbb{P}(A\cap B)$ and consequently
$\mathbb{P}(A\cap B) = \mathbb{P}(A) + \mathbb{P}(B) - \mathbb{P}(A\cup B)$,
it follows that the probability mass of $[f(\xlo),f(\xhi)]$ is
\begin{equation}
\begin{aligned}
    &\quad P_{f(\textbf{x})}([f(\xlo),f(\xhi)])\\
    &=
    \mathbb{P}(\{\omega : f(\textbf{x}(\omega)) \ge f(\xlo)\})
    + 
    \mathbb{P}(\{\omega : f(\textbf{x}(\xi)) \le f(\xhi)\})\\
    &-
    \mathbb{P}(\{\omega : f(\textbf{x}(\omega)) \ge f(\xlo)\text{ or } f(\textbf{x}(\omega)) \leq f(\xhi)\})\\
    &\ge (1-\epsilon) + (1-\epsilon) - 1 = 1-2\epsilon.
\end{aligned}
\end{equation}
\end{proof}
Lemma~\ref{lem:one-minus-two-delta} immediately suggests a strategy by which to solve Problem~\ref{prob:one-step-reachability} with interval geometry: apply $f$ to the upper and lower bounds of an interval containing $1-\delta$ probability mass of $\textbf{x}_0$, and the resulting interval solves Problem~\ref{prob:one-step-reachability} with $\delta_0 = \delta$, $\delta_1 = 2\delta$. The precise details of this approach are explored in the next section.


\section{Algorithms for computing forward reachable set of a desired probability via stochastic order}
\label{sec:nStep}

With formalism of stochastic order for mixed-monotone systems and deriving the propagation interval reach sets and the resuling probability of the state residing within it, we arrive at an algorithm which addresses Problem~\ref{prob:multi-step-reachability}.
\begin{algorithm}
\caption{Interval Sets for \eqref{eq:nonlinearSys} via Mixed Monotonicity}
\label{alg:prop}
\begin{algorithmic}[1]
\Require $P_{\mathbf{x}_0}$, $P_{\mathbf{w}}$, $\delta_{0}$, $\delta_{w}$, $N$, \eqref{eq:nonlinearSys}, \eqref{eq:monotoneSysDesc}
\Ensure $\{(x_k,[\underline{x}_k,\overline{x}_k],\delta_k)\}_{k=1}^{N}$.
\State $x_0\sim P_{\mathbf{x}_0}$ 
\State Find $[\underline{x}_0,\overline{x}_0]$ such that  $P_{\mathbf{x}_0}([\underline{x}_0,\overline{x}_0])\geq 1-\delta_{0}$, \label{algLine:intervalX0}
\State Find $[\underline{w},\overline{w}]$ such that  $P_{\mathbf{w}}([\underline{w},\overline{w}])\geq 1-\delta_{w}$,\label{algLine:intervalW}
\State $\{(x_k,[\underline{x}_k,\overline{x}_k],\delta_k)\}_{k=0}^{N}\leftarrow (x_0,[\underline{x}_0,\overline{x}_0],\delta_{0})$
\For{$k={0,\hdots,N-1}$}
\State $w_k\sim P_{\mathbf{w}}$
\State Evaluate \eqref{eq:nonlinearSys},
\State $\delta_{x_{k+1}} = 1-(1-2\delta_{k})(1-\delta_{w})$ \label{algLine:xIntervalProb}
\State Evaluate \eqref{eq:intervalReach} with $[\underline{x}_k,\overline{x}_k]$ and $[\underline{w},\overline{w}]$. 
\State $[\underline{x}_{k+1},\overline{x}_{k+1}] =$
\Statex \hspace{7em}$[g(\underline{x}_k,\underline{w}_k,\overline{x}_k,\overline{w}_k),g(\overline{x}_k,\overline{w}_k,\underline{x}_k,\underline{w}_k)]$
\State $\{(x_k,[\underline{x}_k,\overline{x}_k],\delta_k)\}_{k=0}^{N}\hspace{-0.6em}\leftarrow (x_{k+1},[\underline{x}_{k+1},\overline{x}_{k+1}],\delta_{k+1})$
\EndFor
\end{algorithmic}
\end{algorithm}
At a high level, Algorithm~\ref{alg:prop} takes in the initial measure on the state and disturbance as well as probability the state and disturbances resides within some intervals. 
Doing so allows on to obtain the interval reach set and the probability of the state lying within it by simulating the system and its decomposition for a finite horizon. 

To initialize the algorithm, we presume that the elements of the initial state $\mathbf{x}$ and disturbance $\mathbf{w}$ are independent in addition to the vectors also being independent. 
This independence assumption allows us one to define the induced measure as the products of the elements of the random vector's induced measures. 
Consider $\mathbf{z} = [\mathbf{x}\ \mathbf{w}]^\intercal\in\R^{p}$ where $p = n+m$,
\begin{equation}
    P_{\mathbf{z}}(z) = P_{\mathbf{z}_{1}}(z_{1})\cdots P_{\mathbf{z}_{p}}(z_{p})\geq 1-\delta_{z}. 
\end{equation}
Therefore we obtain intervals for the elements of random variables by 
\begin{subequations}
    \begin{align}
        P_{\mathbf{z}_{1}}([\underline{z}_{1},\overline{z}_{1}]) = P_{\mathbf{z}_1}(\overline{z}_1) - P_{\mathbf{z}_1}(\underline{z}_1)\geq 1 - \delta_{z_1}, 
    \end{align}
\end{subequations}
where $\delta_{z_i} = \delta_{z}^{1/p}$.
Lines~\ref{algLine:intervalX0} and~\ref{algLine:intervalW} of Algorithm~\ref{alg:prop} conduct these steps such that the joint probability of residing within intervals is,
\begin{align}
    \label{eq:intervalProbAlg}
    P_{\mathbf{z}}([\underline{z},\overline{z}]) = P_{\mathbf{z}_{1}}([\underline{z}_{1},\overline{z}_{1}])\cdots P_{\mathbf{z}_{p}}([\underline{z}_{p},\overline{z}_{p}])\geq 1 - \delta_{z}.  
\end{align}

One can iterate the one step lower bound of the probability of residing within an interval via Lemma~\ref{lem:one-minus-two-delta} over the time horizon as shown in line~\ref{algLine:xIntervalProb} of Algorithm~\ref{alg:prop}. 
However note that the probability continually decreases as the time horizon progresses, and perhaps even becomes negative. 
We can restrict the $1-2\delta_k$ bound to the interval $[0,1]$ and holds irrespective to the underlying monotone nonlinearity. 
We leave it to future work to tighten this bound and incorporate properties of the underlying monotone mapping.
Thus Algorithm~\ref{alg:prop} provides the overapproximations of the forward reach sets in terms of intervals, $X_k$, with the probability $1-\delta_k$ that the state resides within the set.


\section{Examples}
\label{sec:Examples}

We present our approach on two aerospace problems of interest.
For both examples, we presume that we are given continuous system and its decomposition which is discretized with respect to time. 
Both systems also have a  disturbance input, which is then discretized with respect to time. 
All simulations were run in MATLAB 2023a on an Apple M1 Macbook Pro with 16GB of RAM. 

\subsection{CWH dynamics}
The first example we consider are the Clohessy-Wiltshire-Hill (CWH) equations \cite{wiesel1989spaceflight}. 
These equations describe the equations of a deputy satellite rendezvousing with a chief satellite in the same elliptical orbit.  
\begin{align}
    \ddot{x} - 3 \omega x - 2 \omega \dot{y} = m_{d}^{-1}F_{x},\qquad\ddot{y} + 2 \omega \dot{x} = m_{d}^{-1}F_{y}.
  \label{eq:2d-cwh}
\end{align}
We consider the chief at the origin, the position of the deputy is $x,y \in \mathbb{R}$. The orbital frequency is $\omega = \sqrt{\mu/R_{0}^{3}}$ which consists of $\mu$, the gravitational constant, and $R_{0}$, the orbital radius of the spacecraft.
See~\cite{lesser2013stochastic} for further details and numerical values.

We formulate this as a LTI system discretized with a zero order hold and  close the loop using a LQR controller with weights $Q = \mathrm{diag}(1, 1, 10, 10])$ and $R = \mathrm{diag}(10,10)$.
\begin{align}
    \mathbf{x}_{k+1} = \hat{A}\mathbf{x}_k + G\mathbf{w}_k,
\end{align}
where $\hat{A} = A - BK\in\R^{4\times 4}$ with system matrix, $A$, input matrix $B\in\R^{4\times 2}$, control gain matrix via LQR, $K\in\R^{2\times 4}$, and disturbance matrix, $G\in\R^{4\times 4}$. 
The state vector is $x = [x_1\ x_2\ \dot{x}_1\ \dot{x}_2] \in \mathbb{R}^{4}$ and the input is $u = [F_{x}\ F_{y}] \in \mathcal{U}\subset\mathbb{R}^{2}$.
We initialize the state from a Gaussian distribution with mean $\mu_{\mathbf{x}_0} = [10\ -5\ 0\ 0]^\intercal$ and covariance $\Sigma_{\mathbf{x}_0} = \mathrm{diag}(0.5, 0.5, 0.01, 0.01)$. 
We also perturb the system with Gaussian disturbance, $\mathbf{w}_k\in \mathbb{R}^{4}$, with mean $\overline{\mu}_{\mathbf{w}_k} = {[0\ 0\ 0\ 0]}^\intercal$ and covariance matrix $\Sigma_{\mathbf{w}_k} = \mathrm{diag}(0.1,0.1,0.01,0.01)$.

The decomposition of a discrete time linear system is provided by the function, 
\begin{align}
    d(z,w,\hat{z},\hat{w}) =& \max(\hat{A},0) z + \max(G,0) w\nonumber\\
    &+ \min(\hat{A},0) \hat{z} + \min(\hat{G},0)\hat{w}.
\end{align}
We apply Algorithm~\ref{alg:prop} for a horizon $N=5$ where the sampling time is $T_s = 20 s$.
Thus the time horizon is 100 seconds.
We presume that the initial state and the disturbance reside within intervals, \eqref{eq:intervalProbAlg}, with $95\%$ probability (i.e. $1-\delta_0 = 0.95$).

Figure~\ref{fig:Example1F1} shows the trajectory of both the system and the evolution of the system in blue and the overapproximation of the forward stochastic reach set. 
Note that with linear dynamics and an initial 95\% probability in the interval reach sets, we have a majority of the trajectories of the system, in blue, contained within the interval reach set, in red. 
However, the third subplot shows that the probability of staying within the interval reach set drops to zero after 80 seconds.
Thus, while the concentration result in Lemma~\ref{lem:one-minus-two-delta} provides us a guarantee, irrespective of the underlying monotone mapping, incorporating more of the underlying system structure and stochasticity should improve this lower bound. 

\begin{figure}[t!]
    \centering
    \includegraphics[width=\linewidth,trim={0.5cm 0.25cm 0.5cm 0.25cm}, clip]{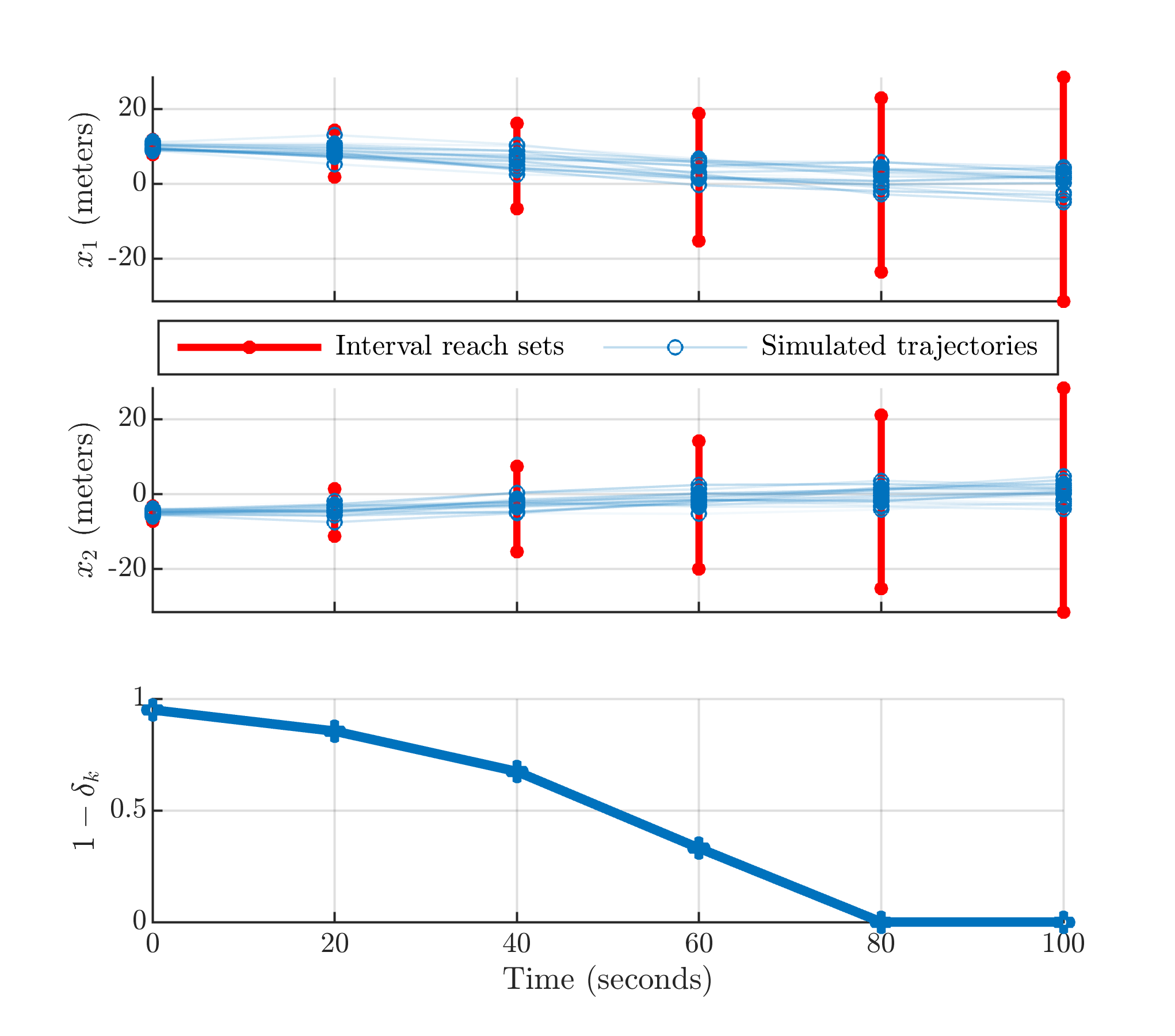}
    \caption{
    Resulting spacecraft trajectory subjected to random initial condition and random disturbances starting with $95\%$ probability (i.e. $1-\delta_0 = 0.95$) of the state and disturbance lying within their intervals, , \eqref{eq:intervalProbAlg}. 
    Note that while the theoretical bound we recursively update in line~\ref{algLine:xIntervalProb} of Algorithm~\ref{alg:prop} drops to zero past 80 seconds, a majority of the simulated trajectories still reside within the interval reach set.}
    \label{fig:Example1F1}
\end{figure}

\subsection{Line-of-sight of a 7 dimensional spacecraft}

We adapt a 7-dimensional spacecraft example as well as its mixed monotone decomposition from~\cite{abate2021decomposition}.
The system is composed of four elements, starting with the angular velocity dynamics, 
\begin{equation}
    \dot{\omega} = J^{-1}(-(\omega\times J\omega) + u + \mathbf{w})
\end{equation}
where $J\in\R^3$ is the inertia matrix, $\omega\in\R^3$ is the angular velocity, $u\in\R^3$ is the input, and $\mathbf{w}\in\R^3$ is the disturbance.
The quaternion dynamics describe the evolution of the orientation, i.e. attitude, of the system, 
\begin{equation}
    \dot{q} = 
\frac{1}{2}
\begin{bmatrix}
-q_1 & -q_2 & -q_3 \\
q_0 & -q_3 & q_2 \\
q_3 & q_0 & -q_1 \\
-q_2 & q_1 & q_0
\end{bmatrix} \omega,
\end{equation}
where $q\in\R^4$ denotes the orientation of the satellite via the quaternion. 
Thus the state is $x = [q\ \omega]^\intercal\in\R^7$. 
We close the loop with a PD controller, 
\begin{equation}
    u_{pd} = \omega\times J\omega -k_p J
    \begin{bmatrix}
        2(q_2q_3 + q_0q_1)\\
        2(q_0q_2 - q_1 q_3\\
        0
    \end{bmatrix}
    - k_d J\omega,
\end{equation}
for $k_p,k_d>0$. 
This controller passes through a saturation function, 
\begin{equation}
    u = \frac{1}{2}\tanh(2u_{pd}). 
\end{equation}
We project the forward reachable set of a particular probability on the rotation about the z axis where a line-of-sight sensor is on the spacecraft. 
This is merely the conversion from the quaternion to satellite's orientation via Euler angles, which is taken to be 
\begin{equation}
    \label{eq:angle_sensor}
    \theta(q) = \arccos(1-2q_1^2-2q_2^2), 
\end{equation}
to recover a mixed monotone decomposition function of the system. 

In this implementation, we presume the inertia matrix, 
\begin{equation}
    J = 
    \begin{bmatrix}
        17.5 & -0.8 & 0.3\\
        -0.8 & 14.9 & 0.4\\
        0.3  & 0.4  & 20.8 
    \end{bmatrix},   
\end{equation}
with controller gains $k_p = 0.6$ and $k_d = 2.25$. 
We initialize the system with a Gaussian distribution with mean $\mu_{\mathbf{x}_0} = [\sqrt{3}/2\ 1/2\ 0\ 0\ 0.1\ 0.1\ 0.1]^{\intercal}$ and covariance $\Sigma_{\mathbf{x}_0} = \mathrm{diag}(1\mathrm{E}-6, 1\mathrm{E}-6, 1\mathrm{E}-8, 1\mathrm{E}-8,1\mathrm{E}-3,1\mathrm{E}-3,1\mathrm{E}-3)$ as well as a
Gaussian disturbance for the noise in angular velocity, $\mathbf{w} \in \mathbb{R}^{3}$, with mean $\mu_{\mathbf{w}_k} = {[0\ 0\ 0]}^\intercal$ and covariance matrix $\Sigma_{\mathbf{w}_k} = 5\mathrm{E}-3\times\mathrm{diag}(1,1,1)$.
For the initial condition and disturbance, we presume the $90\% $ level set.
The time horizon is $N=5$ with sampling time $T_s = 0.01$ to discretize via Euler discretization, 
\begin{equation}
    \mathbf{x}_{k+1} = \mathbf{x}_k + T_s f_k(\mathbf{x}_k,\mathbf{w}_k), 
\end{equation}
where $\mathbf{x}_k = [\mathbf{q}_0\ \mathbf{q}_1\ \mathbf{q}_1\ \mathbf{q}_3\ \boldsymbol{\omega}_1\ \boldsymbol{\omega}_2\ \boldsymbol{\omega}_3]^\intercal\in\R^7$ and $\mathbf{w}_k\in\R^3$. 
We presume that the initial state and the disturbance reside within intervals, \eqref{eq:intervalProbAlg}, with $90\%$ probability (i.e. $1-\delta_0 = 0.9$).
For sake of space, we do not present the decomposition function of this system and instead refer to the implementation the authors provide in~\cite{abate2021decomposition}.
\begin{figure}[ht!]
    \centering
    \includegraphics[width=\linewidth,trim={0.25cm 0.025cm 0.5cm 0.1cm}, clip]{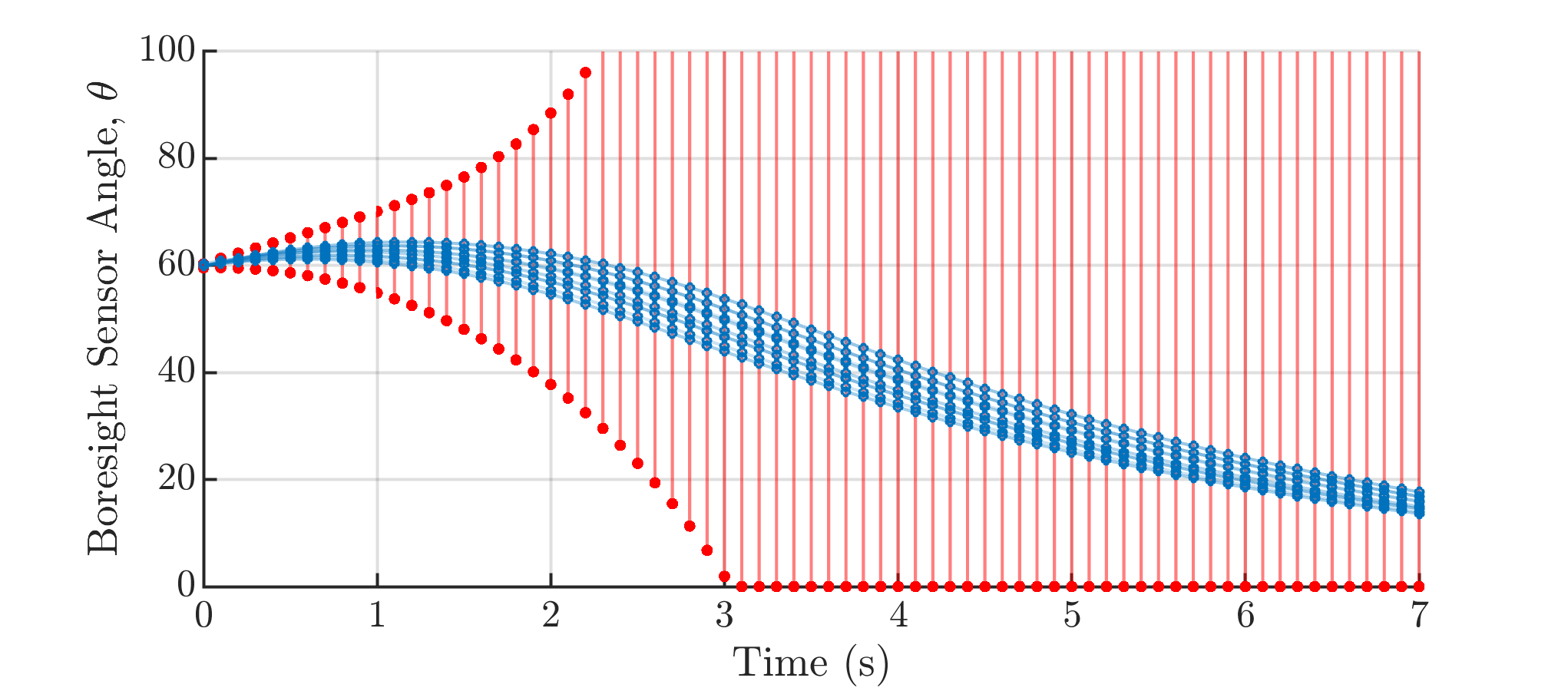}
    \caption{Resulting spacecraft trajectory subjected to random initial condition and random disturbances which reside in their initial intervals with $90\%$ probability (i.e. $1-\delta_0 = 0.9$), \eqref{eq:intervalProbAlg}. 
    In contrast to the linear example, the nonlinear system is highly sensitive to the underlying noise and results in stochastic interval reach sets to blow up to cover all possible values of \eqref{eq:angle_sensor}.}
    \label{fig:Example2F1}
\end{figure}
Figure~\ref{fig:Example2F1} shows sample trajectories with the interval reach sets via the mixed-monotone decomposition. 
Note that due to the system being nonlinear and highly sensitive to perturbations, the interval reach set blows up to the range of \eqref{eq:angle_sensor}.


\section{Conclusion and Future Directions}

The stochastic extension of the mixed-monotonicity property allows us, as in the deterministic case, to compute interval over-approximations of reachable sets through an efficient simulation-based algorithm.
We achieve this result by deriving stochastic order for a mixed monotone stochastic system, 
We establish the correctness guarantee for the algorithm using our stochastic extension of the mixed monotonicity property,
recovering a concentration bound on propagation of a distribution through a monotone mapping.
Future work will look to improving the concentration bound in order to mitigate the decrease in confidence over long time horizons.
Alternatively, we may modify the algorithm to enlarge the $X_k$ sets in order to maintain a constant confidence over time.



\section*{Acknowledgements}

We thank M. Abate and S. Coogan for making their code available for the 7D spacecraft example.


\bibliographystyle{IEEEtran}
\bibliography{refs}
\end{document}